\numberwithin{equation}{section}
\newcommand{\1}{\hat{1}}
\newcommand{\nsetk}[2]{\begin{Bmatrix} #1 \\#2 \end{Bmatrix}}
\newcommand{\rpf}[4]{\nsetk{#1}{#2}_{#3,#4}}
\newcommand{\aver}[1]{\langle #1 \rangle}
\newcommand{\averr}[1]{\langle\!\langle #1 \rangle\!\rangle}
\newtheorem{thm}{Theorem}[section]
\newtheorem{prop}[thm]{Proposition}
\newtheorem{cor}[thm]{Corollary}
\newtheorem{lem}[thm]{Lemma}
\newtheorem{rem}[thm]{Remark}
\newtheorem{example}[thm]{Example}
\title[A simple symmetric exclusion process driven by a tracer]
{A simple symmetric exclusion process driven by an asymmetric tracer particle}
\author{Arvind Ayyer}
\address{Arvind Ayyer, Department of Mathematics, 
Indian Institute of Science, Bangalore 560012, India.}
\email{arvind@iisc.ac.in}
\subjclass[2010]{82C05, 82C22, 82C23, 60J27, 05A15, 05A18}
\keywords{exclusion process, tracer particle, exact solution, steady state, environment process, current, density, Stirling numbers}
\date{\today}
\begin{document}

\begin{abstract}
We consider an exclusion process on a periodic one-dimensional lattice where all particles perform simple symmetric exclusion at rate $1$ except for a single tracer particle, which performs partially simple asymmetric exclusion with rate $p$ to the right and rate $q$ to the left. This model was first considered by Ferrari, Goldstein and Lebowitz (Progr. Phys., 1985) as a test for the validity of the Einstein relation in microscopic systems.

The main thrust of this work is an exact solution for the steady state of this exclusion process. We show that the stationary probabilities factorize and give an exact formula for the nonequilibrium partition function. 
We provide formulas for the current and two-point correlations. 
When the tracer particle performs asymmetric exclusion ($q=0$), the results are shown to simplify significantly and we find an unexpected connection with the combinatorics of set partitions.
Finally, we study the system from the point of view of the tracer particle, the so-called environment process. In the environment process, we show that the density of particles decays with the scaled position in front of the tracer particle in the thermodynamic limit.
\end{abstract}

\maketitle

\section{Introduction}
\label{sec:intro}

We consider an exclusion process on a finite interval with periodic boundary conditions where all particles, except one, perform simple symmetric exclusion. 
The exceptional particle, which we call a {\em tracer particle} borrowing terminology from~\cite{burlatsky-etal-1992}, performs partially simple asymmetric exclusion with forward and backward rates $p$ and $q \leq p$, respectively. The other particles perform standard simple symmetric exclusion.

This process was one of several variants investigated by Ferrari, Goldstein and Lebowitz~\cite[Section IV(c)]{ferrari-goldstein-lebowitz-1985} to understand the validity of the Einstein relation in microscopic dynamics. Later, the dynamics of the tracer particle in such an exclusion process on $\mathbb{Z}$ was studied in a series of papers~\cite{burlatsky-etal-1992,burlatsky-etal-1996}. In particular, it was shown that the mean displacement of the tracer particle starting with Bernoulli initial conditions grows like $\sqrt{t}$, as opposed to $t$ in other models (where the motion is ballistic). 
A variant of this model, in which particles are also allowed to attach and escape from the lattice at some fixed rate, was also studied~\cite{benichou-etal-1999}, where the density of the other particles is calculated as seen from the point of view of the tracer particle.

Many rigorous results are known for the tracer particle when $p = q = 1$ on $\mathbb{Z}$. In that case, the tracer particle is indistinguishable from all the other particles and is commonly referred to as a `tagged particle' in the literature~\cite{ferrari-1986}. We use the term `tracer particle' also to avoid confusion.
Since the literature on tagged particles in exclusion process is immense and we are primarily interested in the case when $p \neq q$, we refer to Liggett's books~\cite[Chapter VIII.4]{liggett-ips-2005} and \cite[Chapter III.4]{liggett-sis-1999} for an introduction to this vast topic. We also note in passing that there have been studies on exclusion processes on $\mathbb{Z}$ where each particle carries a different rate~\cite{benjamini-ferrari-landim-1996,krug-ferrari-1996}, but the nature of the results is very different. 

The first rigorous result for this process on $\mathbb{Z}$ with $p \neq q$ is a law of large numbers for the displacement of the tracer particle~\cite{landim-olla-volchan-1998}, confirming the picture of ~\cite{burlatsky-etal-1992,burlatsky-etal-1996}. Later on, a central limit theorem for this displacement was also obtained~\cite{landim-volchan-2000}.
For a large class of processes on $\mathbb{Z}^d$, the Einstein relation has been proven to hold in \cite{komorowski-olla-2005}, but for a similar process as the one in this article on $\mathbb{Z}^d$, the Einstein relation has been disproven~\cite{loulakis-2005} for $d \geq 3$.

We obtain numerous exact results for the natural finite variant of this symmetric exclusion process with a single asymmetric tracer particle, which we now describe along with the plan of the rest of the paper. 
In Section~\ref{sec:def}, we define the model and state the main results. In Section~\ref{sec:ss}, we will prove the formula for the steady state probabilities. 
We will also derive formulas for the generating function of the nonequilibrium partition function there. The special case $q=0$ has a particularly illustrative combinatorial structure, and that will be dealt with in Section~\ref{sec:ss-q=0}. 
We next prove exact formulas for the current of particles as well as $1$-point and $2$-point correlations in Section~\ref{sec:corrfn}. 
The process, as seen from the point of view of the tracer particle, 
is known as the {\em environment process}. We will calculate the exact density profile both in front of and behind the tracer particle in Section~\ref{sec:profile}. Further, we will show that the density decays with the scaled position ahead of the tracer in the infinite volume limit, thus confirming predictions of~\cite{benichou-etal-1999,oshanin-etal-2004}. The calculation of the asymptotics used in Section~\ref{sec:profile} are relegated to Section~\ref{sec:asymp-pf}. The asymptotic formulas involve implicit functions for arbitrary $p$ and $q$, but they are expressible in terms of the Lambert W function when $p=1$ and $q=0$. This special case is discussed in Section~\ref{sec:asymp-special}. The infinite volume limit with finitely many particles is analyzed in Section~\ref{sec:asymp-finite}.

{\bf Note added in proof}: After this work was made public, Lobaskin and Evans uploaded a preprint~\cite{evans-lobaskin-2020} where they consider a similar model, but with many {\em totally asymmetric} tracer particles. We note however that they work in the totally asymmetric setting in which many of the computations simplify considerably. Unlike this work, they exploit the connection of this model to the zero range process to derive their results.

\section{Model definition and statement of results}
\label{sec:def}
Our model is an asymmetric exclusion process on $L$ sites with periodic boundary conditions, containing $n$ particles, one distinguished particle, which we call the {\em tracer particle} (a total of $n+1$ particles), and the remaining vacancies. 
The tracer particle will be denoted $\1$, the other particles will be denoted $1$ and vacancies will be denoted $0$. 

The dynamics is as follows. All particles except the tracer one perform simple symmetric exclusion with rate $1$, 
\begin{equation}
\label{dynamics-1}
10 \underset{1}{\overset{1}{\rightleftharpoons}} 01.
\end{equation}
The tracer particle performs simple asymmetric exclusion with rates as follows:
\begin{equation}
\label{dynamics-11}
\1 0 \underset{q}{\overset{p}{\rightleftharpoons}} 0 \1.
\end{equation}
We note that although there is no restriction on the values of $p$ and $q$ in principle, we will restrict ourselves to $0 \leq p,q \leq 1$ since the other particles cannot have velocities higher than 1 and the exclusion relation will force the tracer to also have a small velocity. Further, we will always fix $q \leq p$ without loss of generality.

Let $\Omega_{L,n}$ be the set of configurations of the ASEP with $L$ sites and $n$ particles. 
We note that our model is slightly more general than that of \cite{landim-olla-volchan-1998,landim-volchan-2000} because, in their language, we allow the exponential clock attached to the tracer particle to ring at rate $p+q$, which need not be equal to $1$.

\begin{example}
For example, with $L=3$ and $n = 1$, the column-stochastic generator for the process in the ordered basis
\[
\Omega_{3,1} = \{ (0, 1, \1), (0, \1, 1), (1, 0, \1), (1, \1, 0), (\1, 0, 1), (\1, 1, 0) \},
\]
is given by
\[
\left(
\begin{array}{cccccc}
 -p-1 & 0 & 1 & 0 & 0 & q \\
 0 & -q-1 & 0 & 1 & p & 0 \\
 1 & 0 & -q-1 & p & 0 & 0 \\
 0 & 1 & q & -p-1 & 0 & 0 \\
 0 & q & 0 & 0 & -p-1 & 1 \\
 p & 0 & 0 & 0 & 1 & -q-1 \\
\end{array}
\right).
\]
The steady state is then the null right-eigenvector and turns out to be
\[
\frac{1}{3 (p+q+2)} \left(q+1, p+1, p+1, q+1, q+1, p+1 \right).
\]
\end{example}

For any configuration $\tau = (\tau_1,\dots,\tau_L)$, let $\pi(\tau)$ denote the steady state probability of $\tau$. Since the local dynamics is independent of the position, we have the following translation invariance.

\begin{prop}
\label{prop:ss-trans-inv}
The steady state probabilities are invariant under translation, i.e.
\[
\pi(\tau_1, \tau_2, \dots, \tau_L) = \pi(\tau_2,\dots,\tau_L,\tau_1).
\]
\end{prop}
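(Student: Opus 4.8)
The plan is to deduce the translation invariance from the translation invariance of the \emph{dynamics} together with uniqueness of the stationary state. Let $\sigma\colon\Omega_{L,n}\to\Omega_{L,n}$ be the cyclic shift $(\sigma\tau)_i=\tau_{i+1}$ with indices read modulo $L$, a bijection of the finite state space, so that the claim reads $\pi(\tau)=\pi(\sigma\tau)$ for every $\tau$, i.e.\ that $\pi$ is $\sigma$-invariant. The essential observation is that the rates in \eqref{dynamics-1} and \eqref{dynamics-11} depend only on the contents of a pair of adjacent sites and not on their absolute location on the ring. Writing $c(\tau,\tau')$ for the rate from $\tau$ to $\tau'$, this means $c(\tau,\tau')=c(\sigma\tau,\sigma\tau')$ for all $\tau,\tau'$; equivalently, the generator commutes with $\sigma$, and the total exit rate $\sum_{\tau'}c(\eta,\tau')$ is itself $\sigma$-invariant.

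First I would record that, as long as there is at least one vacancy and $p>0$, the Markov chain on $\Omega_{L,n}$ is irreducible, so that its stationary distribution $\pi$ is unique. Indeed, a single vacancy can be shuttled around the ring using the symmetric moves \eqref{dynamics-1} and the forward move of the tracer in \eqref{dynamics-11}, and this suffices to bring any ordinary particle, and the tracer itself, to any prescribed site; hence every configuration communicates with every other. (The degenerate situations---no vacancy, or $p=q=0$---are either trivial or outside the regime of interest.)

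Given uniqueness, the conclusion follows by a short symmetry argument. Define the pushforward measure $\pi'(\tau):=\pi(\sigma\tau)$. Using the equivariance $c(\tau,\tau')=c(\sigma\tau,\sigma\tau')$ and reindexing the balance equation at an arbitrary state $\eta$ by the substitution $\tau\mapsto\sigma\tau$, one checks that $\pi'$ satisfies the same stationarity (master) equations as $\pi$; that is, $\pi'$ is also stationary. By uniqueness $\pi'=\pi$, which is precisely $\pi(\sigma\tau)=\pi(\tau)$, the asserted identity. The only genuine work is the irreducibility bookkeeping---verifying connectivity of the state space under the allowed moves, in particular in the partially asymmetric and totally asymmetric ($q=0$) cases where the tracer cannot step backward---and once that is in place the symmetry argument is immediate.
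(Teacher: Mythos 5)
Your proposal is correct and takes essentially the same route as the paper: the paper asserts the proposition from the fact that the local dynamics in \eqref{dynamics-1} and \eqref{dynamics-11} is independent of position, combined with the ergodicity (hence uniqueness of the steady state) noted at the start of Section~\ref{sec:ss}. Your write-up simply makes this shift-equivariance-plus-uniqueness argument explicit, including the irreducibility bookkeeping that the paper leaves implicit.
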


Furthermore, the steady state is reflection-invariant if $p=q$ by virtue of the dynamics of the particles in \eqref{dynamics-1} and \eqref{dynamics-11}. The more general statement is as follows.

\begin{prop}
\label{prop:ss-refl-inv}
The steady state probabilities are invariant under reflection and the interchange of $p$ and $q$, i.e.
\[
\pi(\tau_1, \tau_2, \dots, \tau_L) = \pi(\tau_L,\dots,\tau_2,\tau_1)\Big\vert_{p \leftrightarrow q}.
\]
\end{prop}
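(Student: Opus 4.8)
The plan is to realize the reflection together with the interchange $p\leftrightarrow q$ as a genuine symmetry of the Markov generator, and then invoke uniqueness of the stationary state. Write $\mathcal{L}_{p,q}$ for the generator on $\Omega_{L,n}$ with forward/backward tracer rates $p,q$, normalized (as in the Example) so that the stationary measure $\pi_{p,q}$ is the right null vector, $\mathcal{L}_{p,q}\pi_{p,q}=0$. Let $R$ denote the reflection $R(\tau_1,\dots,\tau_L)=(\tau_L,\dots,\tau_1)$, which is an involution of $\Omega_{L,n}$, and let $P_R$ be the corresponding permutation operator on measures, so that $(P_R\mu)(\tau)=\mu(R\tau)$ since $R=R^{-1}$.

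The heart of the argument is the single operator identity
\[
P_R\,\mathcal{L}_{p,q}\,P_R^{-1}=\mathcal{L}_{q,p},
\]
which is equivalent to the statement that the rate of the transition $\tau\to\tau'$ under $\mathcal{L}_{p,q}$ equals the rate of $R\tau\to R\tau'$ under $\mathcal{L}_{q,p}$. First I would verify this transition by transition. Reflection carries site $i$ to site $L+1-i$, and hence turns every rightward hop across a bond into a leftward hop across the mirrored bond, and vice versa. For the symmetric exclusion moves \eqref{dynamics-1} the forward and backward rates are both $1$, so these transitions are sent to transitions of the same rate and the symmetric part of the generator is left invariant. For the tracer, the rightward move $\1 0\to 0\1$ of rate $p$ in \eqref{dynamics-11} is sent to a \emph{leftward} move of the tracer of rate $p$; but a leftward tracer move in the $(q,p)$ dynamics carries rate $p$, and symmetrically the rate-$q$ move matches. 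Thus reflection exactly intertwines the $(p,q)$ and $(q,p)$ tracer dynamics, which is the displayed conjugation.

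Granting this identity, I apply $P_R$ to $\mathcal{L}_{p,q}\pi_{p,q}=0$ and obtain $\mathcal{L}_{q,p}(P_R\pi_{p,q})=0$, so $P_R\pi_{p,q}$ is a stationary measure for the $(q,p)$ dynamics. Since the process is irreducible on $\Omega_{L,n}$ (for $p+q>0$ and at least one vacancy present; the degenerate case $p=q=0$ is immediate), its stationary measure is unique, forcing $P_R\pi_{p,q}=\pi_{q,p}$. Unwinding the definition $(P_R\pi_{p,q})(\tau)=\pi_{p,q}(R\tau)$ gives $\pi_{q,p}(\tau)=\pi_{p,q}(R\tau)$; replacing $\tau$ by $R\tau$ and using that $R$ is an involution yields $\pi_{p,q}(\tau)=\pi_{q,p}(R\tau)$, which is exactly the claimed equality $\pi(\tau_1,\dots,\tau_L)=\pi(\tau_L,\dots,\tau_1)\big\vert_{p\leftrightarrow q}$.

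The only genuinely delicate point is bookkeeping the orientation reversal correctly for the tracer: one must check that a rightward hop of rate $p$ maps to a leftward hop of rate $p$, and that in the reflected $(q,p)$ system this leftward hop indeed carries rate $p$ rather than $q$, which is precisely what the simultaneous swap $p\leftrightarrow q$ is there to arrange. Everything else is a routine symmetry-and-uniqueness argument, and no knowledge of the explicit form of $\pi_{p,q}$ is required. Specializing to $p=q$ recovers the plain reflection invariance noted just before the statement.
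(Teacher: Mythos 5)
Your proof is correct, and it is essentially the paper's own justification made explicit: the paper states this proposition without a separate proof, appealing precisely to the fact that reflection combined with the swap $p \leftrightarrow q$ is a symmetry of the dynamics \eqref{dynamics-1}--\eqref{dynamics-11}, which your conjugation identity $P_R\,\mathcal{L}_{p,q}\,P_R^{-1}=\mathcal{L}_{q,p}$ together with uniqueness of the stationary distribution formalizes. Your orientation bookkeeping for the tracer rates (a rightward rate-$p$ hop reflecting to a leftward hop, which carries rate $p$ in the $(q,p)$ dynamics) is exactly right, so nothing needs to be added.
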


From Proposition~\ref{prop:ss-trans-inv}, it suffices to consider the stationary probabilities of configurations that begin with $\1$. For a configuration $\tau$ with $\tau_1 = \1$, define
\begin{equation}
\label{def-ss-wt}
w(\tau) = \prod_{\substack{i = 2 \\ \tau_i = 0}}^L
\left( 1 + p \, m_i(\tau) + q \, n_i (\tau) \right),
\end{equation}
where $m_i(\tau)$ (resp. $n_i(\tau)$) is the number of $1$'s
to the left (resp. right) of $i$ in $\tau$.

\begin{thm}
\label{thm:ss}
In the system with $L$ sites and $n$ $1$'s, the steady state probability of $\tau \in \Omega_{L,n}$ with $\tau_1 = \1$ is given by
\begin{equation}
\label{ss-formula}
\pi(\tau) = \frac{w(\tau)}{Z_{L,n}},
\end{equation}
where 
\[
Z_{L,n}(p,q) = \sum_{\tau \in \Omega_{L,n}} w(\tau)
\]
is the (nonequilibrium) partition function or normalization factor.
\end{thm}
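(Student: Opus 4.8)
The plan is to verify directly that the candidate measure $\pi = w/Z_{L,n}$ satisfies the stationary master equation $\pi\mathcal{L}=0$. Since the chain on $\Omega_{L,n}$ is irreducible, Perron--Frobenius guarantees a unique stationary distribution, so checking global balance for the candidate identifies it. By Proposition~\ref{prop:ss-trans-inv} it suffices to extend $w$ to all of $\Omega_{L,n}$ by translation invariance --- reading the factors $1+p\,m_i+q\,n_i$ clockwise from wherever the tracer sits --- and then to check balance only at configurations $\tau$ with $\tau_1=\1$, since balance at $\tau$ then forces it at every rotation. I would phrase the master equation as the statement that the total net probability current out of $\tau$ vanishes, and compute this current bond by bond.

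Next I would carry out the local current computations. For a bond $(j,j+1)$ with $2\le j\le L-1$ (not touching the tracer), flipping $10\leftrightarrow 01$ changes only the factor of the participating vacancy, multiplying $w$ by $(c+q)/(c+p)$ for an explicit common quantity $c$; as both hopping rates equal $1$, the net current across such a bond is $\pm\,w(\tau)(p-q)/f_i$, where $f_i=1+p\,m_i+q\,n_i$ is the factor of the vacancy involved and the sign records whether its occupied neighbour lies to the left or right. For the two bonds adjacent to the tracer I would compute the weight of the tracer-displaced configuration: when the tracer hops over a vacancy, in the tracer's frame that vacancy travels from the front of the clockwise reading to the back, every other factor is unchanged, and $w$ is multiplied by exactly $(1+pn)/(1+qn)$. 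Combining this with the rates $p$ and $q$, the tracer-right and tracer-left currents come out to $+w(\tau)(p-q)/f_2$ and $-w(\tau)(p-q)/f_L$, which is precisely the formula for a regular bond with the tracer acting as an occupied neighbour.

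With every bond in a uniform shape, the balance condition collapses to the purely combinatorial identity
\[
S \;=\; \sum_{i:\,\tau_i=0} \frac{1}{f_i}\Big(\mathbbm{1}[\tau_{i-1}\neq 0]-\mathbbm{1}[\tau_{i+1}\neq 0]\Big)\;=\;0,
\]
the overall prefactor being $w(\tau)(p-q)$, which also disposes of the symmetric case $p=q$ at once. To prove $S=0$ I would group the vacancies into maximal runs of consecutive $0$'s bounded by occupied sites. Since no ordinary particle lies strictly between two vacancies of the same run, $m_i$ --- and hence $f_i$ --- is constant along a run; in the first sum only the clockwise-first vacancy of each run has an occupied left neighbour, and in the second sum only the clockwise-last one has an occupied right neighbour, so each run contributes the identical amount $1/f_{\mathrm{run}}$ to both sums. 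Summing over runs yields $S=0$.

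I expect the main obstacle to be the bookkeeping at the two tracer bonds: one must extend $w$ in a genuinely translation-invariant way, correctly track how the displaced vacancy's factor jumps between $1+qn$ and $1+pn$, and confirm that the resulting currents assemble into the same $\pm\,w(\tau)(p-q)/f_i$ shape as the bulk, so that the tracer simply counts as an occupied neighbour in $S$. Once that uniformity is secured, the run-grouping cancellation is immediate.
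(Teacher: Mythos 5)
Your proposal is correct, and it verifies the same global-balance statement as the paper but organizes the cancellation in a genuinely different way. The paper writes $\tau$ in block form $\1 0^{\ell_1} 1^{m_1} \cdots 0^{\ell_k} 1^{m_k} 0^{\ell_{k+1}}$, splits into four cases according to whether $\ell_1$ and $\ell_{k+1}$ vanish (treating only one case explicitly), and then pairs the incoming terms belonging to consecutive blocks of $1$'s so that each pair of weight ratios sums to $2$, with the tracer's incoming term combined with that of the first block to give $1+p$; the bookkeeping is indexed by blocks of particles and matched against the total outgoing rate $p+2k-1$. You instead decompose the net probability flux bond by bond, prove the uniform local formula $\pm\,w(\tau)(p-q)/f_i$ for every active bond --- including the two tracer bonds, where the ratio $(1+pn)/(1+qn)$ makes the tracer act exactly like an ordinary occupied neighbour --- and then cancel over maximal runs of vacancies, using that $f_i$ is constant along a run and that each run has exactly one vacancy with an occupied left neighbour and exactly one with an occupied right neighbour. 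Your route buys uniformity: there is no case analysis (all four of the paper's cases, as well as the degenerate configurations with $n=0$ or with no vacancies, are covered by the same computation), the reversible case $p=q$ is disposed of at once by the overall factor $(p-q)$, and the mechanism of stationarity --- zero net current across each run of vacancies --- is made structurally explicit. The paper's route stays closer to the raw master equation and needs no bond-current formalism, but at the cost of the case split and a less symmetric pairing. One small remark: you do not actually need Proposition~\ref{prop:ss-trans-inv} to restrict attention to configurations with $\tau_1=\1$; it suffices to define the candidate measure as the rotation-invariant extension of $w$, after which the balance equation at any rotation of $\tau$ is just a relabelling of the balance equation at $\tau$ itself.
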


The nature of the formula for $w(\tau)$ in \eqref{def-ss-wt} shows that the steady state in Theorem~\ref{thm:ss} is a nonequilibrium state if $p \neq q$, that is, there is no detailed balance. Equivalently, the process is irreversible.
Theorem~\ref{thm:ss} will be proved in Section~\ref{sec:ss}.

\begin{rem}
The factorization of the steady state in Theorem~\ref{thm:ss} can also be explained by a standard mapping to the zero-range process, for which the factorization is a well-established property~\cite{evans-hanney-2005}. However, we emphasize that deciphering the actual formula for the steady state is a nontrivial task, and this is made more complicated since we choose to work with arbitrary $p$ and $q$.
When $q=0$, the proofs simplify considerably.
\end{rem}

From Proposition~\ref{prop:ss-trans-inv} and Theorem~\ref{thm:ss}, it follows that $Z_{L,n}$ is $L$ times
a polynomial in $p$ and $q$ with integer coefficients. We thus define the {\em restricted partition function}, denoted by 
\begin{equation}
\label{pf-rpf-relation}
\rpf L{n+1}pq  = \sum_{\substack{\tau \in \Omega_{L,n} \\ \tau_1 = \1}} w(\tau) = \frac{Z_{L,n}(p,q)}L.
\end{equation}
To explain our notation for the restricted partition function, we recall some basic combinatorial facts.
Let $[n] := \{1,\dots,n\}$ and $\nsetk{[n]}{k}$ denote the collection of set partitions of $[n]$ into exactly $k$ parts. For example, 
\[
\nsetk{[4]}{2} = \{123|4, 124|3, 134|2, 1|234, 12|34, 13|24, 14|23 \}, 
\]
where we have divided the subsets in the set partition by vertical bars.
Set partitions of finite sets are well-studied combinatorial objects (see \cite[Section 6.1]{knuth-graham-patashnik-1994} for example), and the number of set partitions of $[n]$ into $k$ parts is known as the {\em Stirling number of the second kind}, denoted $\nsetk{n}{k}$. 
It is not difficult to see that this is a triangular sequence, i.e. $1 \leq k \leq n$, and it satisfies the recurrence relation
\begin{equation}
\label{stirling-recur}
\nsetk{n+1}k = \nsetk n{k-1} + k \nsetk nk,
\end{equation}
with $\nsetk n1 = \nsetk nn = 1$. One can check from \eqref{stirling-recur} that $\nsetk 42 = 7$ and this matches the above example. 

We analyze the special case of totally asymmetric motion of the tracer particle in Section~\ref{sec:ss-q=0}. 
We find that in the extreme case where the tracer particle performs totally asymmetric exclusion with rate $1$, the restricted partition function simplifies considerably. The following result, which explains our choice of notation, will be proved there. 

\begin{cor}
\label{cor:pf-p1q0}
The restricted partition function for the case $p=1$ and $q=0$ is given by the Stirling number of the second kind,
\[
\rpf L{n+1}10  = \nsetk {L}{n+1}.
\]
\end{cor}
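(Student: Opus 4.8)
The plan is to prove the identity bijectively by matching each weighted configuration against a family of set partitions. First I would specialise the general weight \eqref{def-ss-wt} to $p=1$, $q=0$, which collapses it to $w(\tau) = \prod_{i \geq 2,\, \tau_i = 0}(1 + m_i(\tau))$. Since the tracer sits at position $1$ and is therefore to the left of every other site, the quantity $1 + m_i(\tau)$ is exactly the number of particles — the tracer together with the ordinary $1$'s — lying strictly to the left of the vacancy at site $i$. Thus $w(\tau)$ is simply a product, over the vacancies, of the number of particles preceding each vacancy.

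Next I would invoke the standard ``block-minimum'' encoding of set partitions. A set partition of $[L]$ into $k$ blocks is determined uniquely by its set $S$ of block minima, which is any $k$-subset of $[L]$ containing $1$, together with, for each $j \notin S$, a choice of which already-opened block $j$ joins; the number of such choices is $\#\{s \in S : s < j\}$. Reading a configuration $\tau$ with $\tau_1 = \1$ through this lens, I would identify the $n+1$ particle positions (tracer plus $n$ ones) with the set $S$ of block minima — automatically an $(n+1)$-subset containing $1$ — and the vacancies with the non-minima. Under this identification $w(\tau) = \prod_{j \notin S}\#\{s \in S: s < j\}$ is precisely the number of set partitions of $[L]$ into $n+1$ blocks whose block-minima set is $S$. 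Summing $w(\tau)$ over all admissible $\tau$, equivalently over all $(n+1)$-subsets $S$ of $[L]$ containing $1$, then counts every set partition of $[L]$ into $n+1$ blocks exactly once, yielding $\rpf L{n+1}10 = \nsetk{L}{n+1}$.

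The one point that needs care, and which I regard as the crux, is verifying that the block-minimum encoding really is a bijection: that every $(n+1)$-subset containing $1$ is realised as a block-minima set and contributes exactly the stated product, and in particular that each non-minimum $j$ always has at least one earlier block available (guaranteed because $1 \in S$, so the smallest available minimum never exceeds $j$). Everything else is bookkeeping: the product in \eqref{def-ss-wt} runs over $i \geq 2$, but since site $1$ holds the tracer and is never a vacancy, this coincides with the product over all vacancies.

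Alternatively, and perhaps more cleanly for a formal write-up, I would establish the same identity by a recurrence. Writing $R_{L,n} := \rpf L{n+1}10$ and conditioning on the contents of the last site $L$: if $\tau_L = 1$ the factor structure is unchanged upon deleting it, because a $1$ at the end lies to the right of all sites and so enters no $m_i$, contributing $R_{L-1,n-1}$; if $\tau_L = 0$ it contributes the factor $1 + m_L = n+1$ and deletion leaves the remaining weights intact, contributing $(n+1)R_{L-1,n}$. This gives $R_{L,n} = R_{L-1,n-1} + (n+1)R_{L-1,n}$, which is exactly the Stirling recurrence \eqref{stirling-recur} with $k = n+1$; matching the boundary values $R_{L,0} = R_{L,L-1} = 1$ against $\nsetk{L}{1} = \nsetk{L}{L} = 1$ then closes an induction on $L$.
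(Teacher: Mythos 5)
Both of your arguments are correct, and together they straddle the paper's own route and a genuinely different one. Your recurrence argument is essentially the paper's proof: the paper establishes the more general Theorem~\ref{thm:rpf-special-nsetk}, namely $\rpf{L}{n+1}{p}{0} = p^{L-n-1}\nsetk{L}{n+1}_{1/p}$, by the same device of conditioning on the last site (a trailing $1$ leaves the weight unchanged since $q=0$, a trailing $0$ contributes the factor $1+np$), yielding $\rpf{L}{n+1}{p}{0} = \rpf{L-1}{n}{p}{0} + (1+np)\,\rpf{L-1}{n+1}{p}{0}$, which is then matched against the deformed Stirling recurrence \eqref{genstirling-recur}; the corollary follows by setting $p=1$, exactly as in your induction against \eqref{stirling-recur}. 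Your bijective argument via block minima, on the other hand, is different in kind and arguably more illuminating for this special case: rather than verifying a recurrence, it exhibits a weight-preserving correspondence, and in doing so proves the refined statement that for $p=1$, $q=0$ the weight $w(\tau)$ of a single configuration equals the number of set partitions of $[L]$ into $n+1$ blocks whose set of block minima is exactly the set of particle positions of $\tau$ (tracer included) --- a fact the recurrence proof never makes visible. What the paper's route buys is the $p$-generalization; but note that your bijection refines to that as well: for general $p$ (still $q=0$) the factor $1+p\,m_i(\tau)$ in \eqref{def-ss-wt} says that a vacancy may join the tracer's block with weight $1$ or any of the $m_i(\tau)$ already-opened $1$-blocks with weight $p$, which after bookkeeping reproduces $p^{L-n-1}\nsetk{L}{n+1}_{1/p}$. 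Your identified crux (every non-minimum $j$ has an available earlier block because $1 \in S$) and the boundary checks $R_{L,0}=R_{L,L-1}=1$ are all in order.
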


To perform asymptotic analysis, it will be helpful to compute the generating function for the restricted partition function. As we will see in Corollary~\ref{cor:ss-p=q}, the stationary distribution is uniform when $p=q$. We will therefore restrict our attention to $p>q$
for this analysis.
The main result in this direction is an unexpectedly explicit formula for the bivariate generating function, which is exponential in the $L$ variable and ordinary in the $n$ variable.

\begin{thm}
\label{thm:rpf-egf}
The mixed bivariate generating function of the restrict\-ed partition function when $p > q$ is given by
\begin{equation}
\label{rpf-egf}
\sum_{L=1}^\infty \sum_{n=0}^{L-1} \rpf {L}{n+1}pq  x^n\frac{y^{L-1}}{(L-1)!}
= \exp \left( y + x \frac{\exp(p y) - \exp(q y)}{p-q} \right).
\end{equation}
\end{thm}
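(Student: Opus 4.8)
The plan is to establish the generating-function identity in Theorem~\ref{thm:rpf-egf} by first finding a recurrence for the restricted partition function $\rpf{L}{n+1}pq$ and then translating that recurrence into a differential equation satisfied by the generating function on the left-hand side, which I will then verify is solved by the explicit exponential on the right.

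Let me think about how to set up the recurrence. The weight $w(\tau)$ in \eqref{def-ss-wt} is a product over vacancies of factors $1 + p\,m_i(\tau) + q\,n_i(\tau)$, where $m_i, n_i$ count the $1$'s to the left and right of position $i$. Since all configurations are anchored with $\tau_1 = \1$, the natural combinatorial operation is to build a configuration on $L$ sites from one on $L-1$ sites by inserting a new site, and to track how the weight changes. The cleanest approach is to think about inserting the new symbol at the end (position $L$, immediately before the tracer, which sits cyclically adjacent). If the new symbol is a $1$, it adds no new vacancy factor but it increments $m_i$ by $1$ for every vacancy $i$ already present that lies to its left—which is all of them—thereby multiplying each existing factor's $p$-coefficient. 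If the new symbol is a $0$, it contributes a fresh factor $1 + p\,m + q\cdot 0 = 1 + p n$ (where $n$ is the total number of $1$'s, all to its left). This asymmetry between how $p$ and $q$ enter is exactly what must produce the $(\exp(py)-\exp(qy))/(p-q)$ structure.

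Once the recurrence is in hand, I would set $F(x,y) = \sum_{L\geq 1}\sum_{n=0}^{L-1}\rpf{L}{n+1}pq\, x^n\, y^{L-1}/(L-1)!$ and encode the recurrence as a first-order linear PDE in $F$. The exponential-in-$L$, ordinary-in-$n$ mixing is deliberate: differentiating $y^{L-1}/(L-1)!$ with respect to $y$ shifts $L \to L-1$, matching the ``insert one site'' step, while the factor $x^n$ cleanly accumulates the number of $1$'s. I expect the recurrence to produce a relation of the schematic form $\partial_y F = F + x\,G\cdot F$ or similar, where the vacancy-insertion term $1 + pn$ contributes a piece that, after summing the geometric-type series in the $p$- and $q$-shifts, collapses to $x\,(e^{py}-e^{py})/(p-q)\cdot$(something). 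Then I would verify that $\exp(y + x(e^{py}-e^{qy})/(p-q))$ satisfies this PDE with the correct initial condition, e.g. by checking $F|_{y=0}$ against $\rpf{1}{1}pq = 1$ and matching $\partial_y F$ at the level of the closed form.

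The main obstacle, I expect, is getting the bookkeeping of the $p$-dependence exactly right when a $1$ is inserted. Unlike the $0$-insertion, which adds a single new factor, inserting a $1$ rescales \emph{all} existing vacancy factors by incrementing their $m_i$ counts, and this global rescaling is not a simple multiplicative operation at the level of the partition function—it interacts nontrivially with how many vacancies and $1$'s are already present. The delicate point is to show that summing these contributions over all configurations reorganizes into the clean exponential generating function, and in particular that the $p$ and $q$ enter only through the symmetric-looking combination $(e^{py}-e^{qy})/(p-q)$ despite their asymmetric roles in $w(\tau)$. I anticipate that the cleanest route around this obstacle is to first prove an auxiliary recurrence in which the factor $1 + p\,m_i + q\,n_i$ is handled by a clever choice of insertion position (inserting either just after the tracer or just before it), so that each inserted symbol sees only $1$'s on one side, thereby decoupling the $p$ and $q$ contributions into two geometrically summable families whose difference yields the stated generating function.
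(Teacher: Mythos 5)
Your overall route---derive an insertion recurrence for $\rpf L{n+1}pq$, convert it into an equation for the bivariate generating function, and verify the closed form---is genuinely different from the paper's proof, which uses no recurrence at all: the paper expands the ordinary generating function of Theorem~\ref{thm:rpf-ogf} in partial fractions to get the explicit alternating sum of Corollary~\ref{cor:rpf-formula}, extends the summation range to infinity using the vanishing property in Remark~\ref{rem:gf-property}, and then performs the resulting exponential sums directly. So your plan, if completed, would constitute an alternative proof. However, it has a genuine gap at exactly the point you call the ``main obstacle,'' and the workaround you propose does not close it. No ``clever choice of insertion position'' can decouple the particle-insertion step: since every vacancy $i$ satisfies $m_i(\tau)+n_i(\tau)=n$, inserting a new $1$ \emph{anywhere} increments this count for \emph{all} existing vacancies, so every factor of $w$ changes. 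Inserting the $1$ just after the tracer adds $p$ to each factor, which is equivalent to replacing $(p,q)$ by $\bigl(\tfrac{p}{1+p},\tfrac{q}{1+p}\bigr)$ and multiplying by $(1+p)^{L-n-1}$; this is precisely the paper's Proposition~\ref{prop:pf-recur2},
\[
\rpf L{n+1}pq = (1+n q)\, \rpf {L-1}{n+1}pq
+ (1+p)^{L-n-1}\, \rpf {L-1}{n}{\frac p{1+p}}{\frac q{1+p}},
\]
and there is no variant of it with fixed parameters.

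Consequently the generating function $F(x,y;p,q)$ on the left of \eqref{rpf-egf} does \emph{not} satisfy a first-order linear PDE of your schematic form $\partial_y F = F + xG\cdot F$; taking coefficients of $x^n y^{L-2}/(L-2)!$ in the recurrence above yields instead the functional equation
\[
\partial_y F(x,y;p,q) = \left(1+ q x \partial_x\right) F(x,y;p,q)
+ x\, F\!\left(\frac{x}{1+p},\, (1+p)y;\, \frac{p}{1+p},\, \frac{q}{1+p}\right),
\]
in which both the arguments and the parameters are rescaled. Your route can be repaired from here, because the right-hand side of \eqref{rpf-egf} does satisfy this equation: the exponent $y + x\bigl(e^{py}-e^{qy}\bigr)/(p-q)$ changes only by the additive term $py$ under the substitution $(x,y,p,q)\mapsto\bigl(\tfrac{x}{1+p},(1+p)y,\tfrac{p}{1+p},\tfrac{q}{1+p}\bigr)$, so the last term collapses to $x e^{py}F$, and then
\[
1 + qx\,\frac{e^{py}-e^{qy}}{p-q} + x e^{py} = 1 + x\,\frac{p e^{py} - q e^{qy}}{p-q}
\]
matches the $y$-derivative of the exponent; combined with $F(x,0;p,q)=1$ and induction on $L$ (with the inductive hypothesis quantified over \emph{all} $(p,q)$, since the recurrence at $(p,q)$ refers to level $L-1$ at rescaled parameters), this proves the theorem. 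But this parameter-rescaling mechanism is the essential idea missing from your proposal; as written, the anticipated PDE does not exist and the ``decoupling by insertion position'' step would fail.
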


Theorem~\ref{thm:rpf-egf} will be proved in Section~\ref{sec:ss}.
Using Theorem~\ref{thm:rpf-egf}, we will show in Section~\ref{sec:asymp-pf} that the asymptotics of the 
restricted partition function $\rpf L{n+1}pq$ as $n, L \to \infty$ so that $n/L \to \rho \in (0,1)$ is given by 
\begin{multline}
\label{rpf-asymp}
\rpf L{\rho L + 1}pq \approx \frac{1}{\sqrt{2\pi L \left( \rho - (p \rho y_0 - 1) (q \rho y_0 - 1) \right)}}
\frac{\exp(y_0-1)}{y_0^{L-1}}  \\
 \times \left( \frac{L}{e} \right)^{L(1-\rho)-1}
\left( \frac{\exp(p y_0) - \exp(q y_0)}{\rho (p - q)} \right)^{\rho L},
\end{multline}
where $y_0$ is the unique positive real solution to the equation
\[
\exp((p-q)y) = \frac{\rho q y - 1}{\rho p y - 1},
\]
and where we use $a_L \approx b_L$ to mean $\lim_{L \to \infty} a_L/b_L = 1$. See Figure~\ref{fig:eg-pf-plot} for a comparison between the actual and asymptotic values of the restricted partition function. As $L$ gets larger, it is clearly seen that the ratio approaches 1.

\begin{center}
\begin{figure}[htbp!]
\includegraphics[scale=0.8]{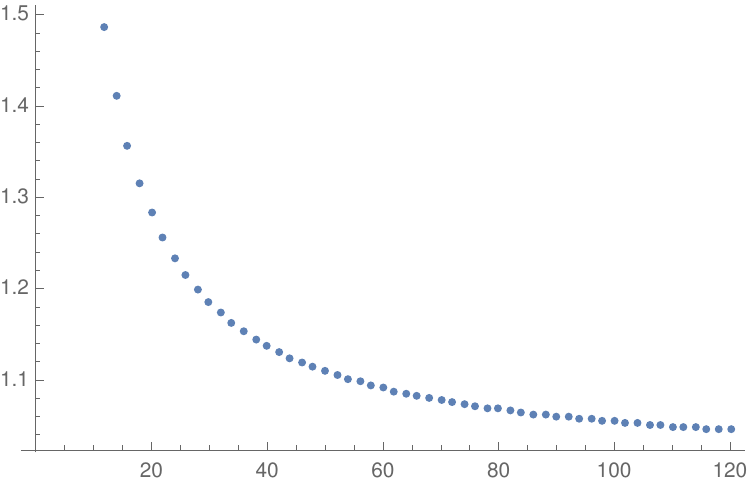}
\caption{A plot of the ratio of the asymptotic to the exact formula for the restricted partition function 
with $n = 0.5 L$, $p=0.55$ and $q=0.78$ for even values of $L$ ranging from 12 to 120. }
\label{fig:eg-pf-plot}
\end{figure}
\end{center}

From \eqref{rpf-asymp}, it is clear that the partition function in this case grows faster than exponentially in $L$. As a result, what is sometimes called the {\em nonequilibrium free energy}, defined by the limit
\[
\lim_{L \to \infty} \frac{\log Z_{L,\rho L}}{L}
\]
does not exist. Strictly speaking, the partition function is not well-defined because one could rescale all the $w(\tau)$'s arbitrarily.
However, if we insist that the GCD of all the weights $w(\tau)$ for $\tau \in \Omega_{L,n}$ is equal to $1$ to avoid spurious common factors, then we do not have any more freedom and the statement above is well-defined.

It is well understood that the nonequilibrium free energy is not a free energy in the sense of conventional equilibrium statistical mechanics. 
Therefore, its nonexistence does not violate any known laws.
However, in all exactly-solvable examples that we know of, this nonequilibrium free energy is well-defined for almost all points in the phase diagram. Moreover, the non-analyticity of this free energy signals phase transitions in the nonequilibrium steady state. This is the case, for example, for the open simple totally asymmetric exclusion process (TASEP)~\cite{dehp} and for many other models~\cite{be}.
In contrast, the free energy is not well-defined for any values of the parameters $p > q$ in our model. 

Using our results, we now compute important correlation functions in this process. We find an exact formula for the current. We denote the current of $1$'s and $\1$'s in the steady state by $J_1$ and $J_{\1}$, respectively.

\begin{thm}
\label{thm:current}
In the lattice with $L$ sites and $n$ 1's, the currents are given by
\[
J_{\1} = (p-q) \frac{\rpf {L-1}{n+1}pq }{Z_{L,n}}, \quad
J_1 = (p-q) \frac{n \, \rpf {L-1}{n+1}pq }{Z_{L,n}}.
\]
\end{thm}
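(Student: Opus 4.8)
The plan is to compute each current as the net rate at which the relevant species crosses a single fixed bond in the stationary measure; by the translation invariance of Proposition~\ref{prop:ss-trans-inv} this single-bond current is independent of the chosen bond, so I may fix the bond between sites $1$ and $2$. Writing $\pi(\,\cdot\,)$ for the stationary probability of an event, and recalling that the tracer's only moves are $\1 0 \to 0 \1$ at rate $p$ and $0 \1 \to \1 0$ at rate $q$, the net tracer current across this bond is
\[
J_{\1} = p\,\pi(\tau_1 = \1,\ \tau_2 = 0) - q\,\pi(\tau_1 = 0,\ \tau_2 = \1).
\]
The two programmes are then: evaluate this expression directly for $J_{\1}$ using a vacancy-deletion bijection, and obtain $J_1$ from $J_{\1}$ by a conservation argument.

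For $J_{\1}$, the key step is a weight-preserving correspondence. Every $\tau$ with $\tau_1 = \1$ and $\tau_2 = 0$ arises uniquely from some $\sigma \in \Omega_{L-1,n}$ with $\sigma_1 = \1$ by inserting one vacancy immediately after the tracer. Since the deleted site carries a $0$, the counts $m_i$ and $n_i$ of every other vacancy are unchanged, so by \eqref{def-ss-wt} we have $w(\tau) = (1+qn)\,w(\sigma)$, where $1+qn$ is the factor \eqref{def-ss-wt} contributes at site $2$ (there $m_2 = 0$, $n_2 = n$). Summing and invoking \eqref{pf-rpf-relation} gives $\sum_{\tau_1=\1,\,\tau_2=0} w(\tau) = (1+qn)\,\rpf{L-1}{n+1}{p}{q}$. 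For the second term I first apply the cyclic shift to rewrite $\pi(\tau_1 = 0,\ \tau_2 = \1) = \pi(\tau_1 = \1,\ \tau_L = 0)$ and then delete the vacancy at site $L$, which now contributes the factor $1+pn$ (there $m_L = n$, $n_L = 0$), yielding $\sum_{\tau_1=\1,\,\tau_L=0} w(\tau) = (1+pn)\,\rpf{L-1}{n+1}{p}{q}$. Substituting both into the display and dividing by $Z_{L,n}$, the combination $p(1+qn) - q(1+pn) = p-q$ factors out, giving the stated formula for $J_{\1}$.

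For the ordinary particles I would avoid a second, more delicate bijective sum and instead argue $J_1 = n J_{\1}$ by a non-crossing principle. The dynamics \eqref{dynamics-1}--\eqref{dynamics-11} only ever move a particle into an adjacent empty site, so no two particles exchange positions and the cyclic order of the $n+1$ particles (the $n$ ordinary ones together with the tracer) is preserved for all time. Lifting the trajectories to the universal cover $\mathbb{Z}$ of the ring, this ordering forces the pairwise differences of the lifted positions to stay bounded, whence all $n+1$ particles share a common asymptotic winding rate $\omega$. The net crossing rate of the fixed bond by one particle is exactly its winding rate, so $J_{\1} = \omega$ for the lone tracer while the $n$ ordinary particles jointly give $J_1 = n\omega = n J_{\1}$; combining with the formula for $J_{\1}$ yields the claimed expression.

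The main obstacle is this last identity $J_1 = n J_{\1}$: one must argue that the almost-sure equality of asymptotic winding rates (a pathwise statement resting on non-crossing and the boundedness of lifted position differences) genuinely coincides with the identity of the \emph{stationary} single-bond currents used for $J_{\1}$. This is where the ergodicity of the finite irreducible chain enters, equating time averages with the steady-state bond expectations; the rest is the routine bookkeeping carried out above. As a consistency check, the $L=3$, $n=1$ example gives $\rpf{2}{2}{p}{q} = 1$ and $Z_{3,1} = 3(p+q+2)$, so both formulas predict the current $(p-q)/\bigl(3(p+q+2)\bigr)$, in agreement with the explicit steady state listed there.
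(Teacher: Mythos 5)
Your derivation of $J_{\1}$ is in substance identical to the paper's: the paper also fixes one bond (it uses the bond between sites $L$ and $1$), writes the tracer current as $p$ times the weight-sum over configurations $\1 0\alpha$ minus $q$ times the sum over $\1\alpha 0$, and extracts the same factors $(1+nq)$ and $(1+np)$ by deleting the vacancy adjacent to the tracer, so that $p(1+nq)-q(1+np)=p-q$ factors out. Where you genuinely depart is $J_1$. The paper never leaves the stationary measure: it conditions on the tracer's location, uses translation invariance to write $J_1=\sum_{k=2}^{L-1}\bigl(\aver{\sigma_1\tau_k(1-\tau_{k+1})}_{L,n}-\aver{\sigma_1(1-\tau_k)\tau_{k+1}}_{L,n}\bigr)$, and then applies Lemma~\ref{lem:current-corr}, whose weight identity $w(\1\alpha_1 10\alpha_2)-w(\1\alpha_1 01\alpha_2)=(p-q)\,w(\1\alpha_1 1\alpha_2)$ reduces each summand to a one-point function in the system of size $L-1$; summing over $k$ produces the factor $n$. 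You instead prove the relation $J_1=nJ_{\1}$ pathwise: exclusion dynamics preserves the cyclic order of the $n+1$ particles, so the lifted trajectories on the universal cover stay within distance $L$ of one another, all particles share a single winding rate $\omega$, and the ergodic theorem for the finite irreducible chain identifies the stationary bond current of each species with (number of particles of that species)$\,\times\,\omega$. Both arguments are correct, and you rightly flag that the only delicate point in yours is the ergodic identification of almost-sure time-averaged crossing rates with stationary bond expectations, which is standard for a finite irreducible Markov chain. Your route is more conceptual and more general — it yields $J_1=nJ_{\1}$ for any ring dynamics with conserved particle order, with no reference to the explicit weights, and it bypasses the combinatorial bookkeeping of Lemma~\ref{lem:current-corr} entirely. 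What the paper's route buys is economy within its own architecture: Lemma~\ref{lem:current-corr} is not a detour but the induction engine reused verbatim for the two-point correlations of Theorem~\ref{thm:dens-formula}, so the current formula comes essentially for free from machinery the paper needs anyway, and the whole proof stays inside the algebra of stationary weights without invoking any pathwise or ergodic input.
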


Theorem~\ref{thm:current} will be proved in Section~\ref{sec:corrfn}.
We will also show in Section~\ref{sec:corrfn} that the asymptotic value of the current is zero when we take $L, n \to \infty$ so that $n/L \to \rho \in (0,1)$, i.e.,
\begin{equation}
\label{curr-asymp}
\lim_{L \to \infty} J_{\1} = \lim_{L \to \infty} J_1 = 0.
\end{equation}
This is not surprising since one does not expect that a single particle with a drift can generate a global current in a large system.

We are also interested in seeing the profile of particles from the point of view of the tracer particle. This is known as the {\em environment process}. 
By computing two-point correlation functions between the tracer particle and other particles, we obtain a formula for the density profile in the environment process. Let $\averr{\cdot}_{L,n}$ denote the expectation in the environment process. Since we will be interested in the density of particles both ahead of and behind the tracer particle, we will consider positions which are both positive and negative (relative to the tracer, which we will place at position $0$).

\begin{thm}
\label{thm:dens-from-test}
In the system with $L$ sites and $n$ $1$'s,
\begin{align*}
\averr{\tau_i}_{L,n} &= \sum_{j=0}^{L-n-1} \sum_{k=0}^j
\binom{L - 1 - i}k  \binom{i-1}{j-k} p^k q^{j-k} 
\frac{\rpf{L-j-1}{n}pq }{\rpf L{n+1}pq }, \\
\averr{\tau_{-i}}_{L,n} &= \sum_{j=0}^{L-n-1} \sum_{k=0}^j
\binom{L - 1 - i}k  \binom{i-1}{j-k} q^k p^{j-k} 
\frac{\rpf{L-j-1}{n}pq }{\rpf L{n+1}pq },
\end{align*}
for $1 \leq i \leq L-1$.
\end{thm}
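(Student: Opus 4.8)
The plan is to reduce the environment-process density to a two-point correlation in the steady state and then evaluate it by a weight-preserving deletion argument. By Proposition~\ref{prop:ss-trans-inv} we may place the tracer at site $1$, so that the environment measure assigns to the configuration with $\tau_1 = \1$ the probability $w(\tau)/\rpf{L}{n+1}pq$, and the density $\averr{\tau_i}_{L,n}$ at the site $i$ ahead of the tracer equals the normalized sum of $w(\tau)$ over all such configurations that carry a $1$ at site $1+i$. Thus the entire task is to evaluate $\sum_{\tau} w(\tau)$ over configurations with $\tau_1 = \1$ and $\tau_{1+i} = 1$, and then divide by $\rpf{L}{n+1}pq$.

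First I would expand the product in \eqref{def-ss-wt}: each factor $1 + p\,m_i(\tau) + q\,n_i(\tau)$ is a sum over ``choices'', in which every vacancy either contributes the constant $1$, or selects one of the $m_i(\tau)$ particles to its left (weight $p$), or one of the $n_i(\tau)$ particles to its right (weight $q$). This rewrites $w(\tau)$ as a sum over configurations decorated by a pointer from each vacancy either to a particle on one of its two sides or to nothing. The crucial and slightly delicate point, which I expect to be the main obstacle, is that deleting a particle together with all vacancies pointing to it does not disturb the contributions of the surviving vacancies: vacancies are never counted in $m_i$ or $n_i$, and deletion preserves the left/right order of the remaining particles, so every surviving pointer keeps its weight $p$, $q$ or $1$ and still points to the same particle. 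Summing over the pointers of the surviving vacancies therefore reproduces exactly the weight of the reduced configuration.

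With this in hand I would mark the particle $P$ at site $1+i$ and stratify the sum by the number $j$ of vacancies pointing to $P$, writing $k$ for the number of these lying to the right of $P$ (each contributing $p$, since $P$ is on their left) and $j-k$ for those on the left (each contributing $q$). Deleting $P$ together with these $j$ vacancies yields a configuration $\sigma$ on $L-1-j$ sites with the tracer at site $1$ and $n-1$ particles. Conversely, to rebuild $\tau$ one reinserts $P$ at site $1+i$ and redistributes the $j$ deleted vacancies; because the split of the non-tracer elements of $\sigma$ to the left and right of $P$ is forced by $i,j,k$, the reinsertion amounts only to choosing which of the $i-1$ sites left of $P$ and which of the $L-1-i$ sites right of $P$ receive the new vacancies, giving $\binom{i-1}{j-k}\binom{L-1-i}{k}$ possibilities carrying weight $p^k q^{j-k}$. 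Summing $w(\sigma)$ over all such $\sigma$ yields $\rpf{L-j-1}{n}pq$, so that
\[
\sum_{\substack{\tau:\,\tau_1=\1 \\ \tau_{1+i}=1}} w(\tau) = \sum_{j=0}^{L-n-1}\sum_{k=0}^{j} \binom{L-1-i}{k}\binom{i-1}{j-k}\, p^k q^{j-k}\, \rpf{L-j-1}{n}pq ,
\]
and dividing by $\rpf{L}{n+1}pq$ gives the first formula. (A quick check on the $L=3$, $n=1$ example confirms $\averr{\tau_1}_{3,1}=(1+p)/(2+p+q)$.)

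Finally, the formula behind the tracer follows from the first by the reflection symmetry of Proposition~\ref{prop:ss-refl-inv}: reflection interchanges the site $i$ ahead with the site $i$ behind and swaps $p \leftrightarrow q$, so $\averr{\tau_{-i}}_{L,n}$ evaluated at $(p,q)$ equals $\averr{\tau_i}_{L,n}$ evaluated at $(q,p)$. Swapping $p$ and $q$ in the first formula replaces $p^k q^{j-k}$ by $q^k p^{j-k}$; and since the restricted partition function $\rpf{L}{n+1}pq$ is symmetric in $p$ and $q$ — as is visible directly from the generating function in Theorem~\ref{thm:rpf-egf}, or equivalently from reflection symmetry together with translation invariance — the ratios are unaffected, which yields the stated expression for $\averr{\tau_{-i}}_{L,n}$.
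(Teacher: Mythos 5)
Your proposal is correct, and it takes a genuinely different route from the paper. The paper first proves Theorem~\ref{thm:dens-formula} for the unconditioned two-point function $\aver{\sigma_1\tau_i}_{L,n}$ by a double induction --- reverse induction on the site $i$ with base case $i=L$, and induction on $L$ --- whose inductive step is Lemma~\ref{lem:current-corr} (the telescoping identity $\aver{\sigma_1\tau_i}_{L,n}-\aver{\sigma_1\tau_{i+1}}_{L,n}=(p-q)\rpf{L-1}{n+1}pq\,\aver{\sigma_1\tau_i}_{L-1,n}/Z_{L,n}$) and whose verification leans on the explicit alternating-sum formula of Corollary~\ref{cor:rpf-formula} together with nontrivial binomial-coefficient manipulations; Theorem~\ref{thm:dens-from-test} is then obtained by relabeling sites relative to the tracer, with the formula behind the tracer coming from the circular identification $-i\equiv L-i$ rather than from reflection symmetry as you do (both are valid, and your use of Proposition~\ref{prop:ss-refl-inv} needs the symmetry of $\rpf{L}{n+1}pq$ in $p,q$, which is exactly Corollary~\ref{cor:rpf-symmetric}). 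Your proof instead expands each factor $1+p\,m_i+q\,n_i$ of $w(\tau)$ as a sum over vacancy-to-particle pointers and sets up a weight-preserving bijection: delete the marked particle at site $1+i$ together with the $j$ vacancies pointing to it, recovering $\rpf{L-j-1}{n}pq$ from the reduced configurations and the factor $\binom{L-1-i}{k}\binom{i-1}{j-k}p^kq^{j-k}$ from the reinsertion choices. What each approach buys: the paper's route reuses Lemma~\ref{lem:current-corr}, which is needed anyway for the current (Theorem~\ref{thm:current}), so the marginal cost of the induction is low within the paper's architecture; your bijective argument is self-contained, avoids both the induction and the alternating-sum formula, and explains combinatorially where the binomial coefficients and the weights $p^kq^{j-k}$ come from --- something the paper's computation verifies but does not illuminate.
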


Theorem~\ref{thm:dens-from-test} will be proved in Section~\ref{sec:profile} as a consequence of the two-point correlation functions between the tracer particle and other particles in Theorem~\ref{thm:dens-formula}. We now use \eqref{rpf-asymp} to calculate the density profile in the environment process in the limit of large system size and finite density $\rho$. 
Let $\averr \cdot$ be the distribution as seen from the tracer particle in the thermodynamic limit.
In Section~\ref{sec:profile}, we show that the density profile ahead of and behind the tracer particle at a distance $xL$, $x \in [0,1]$, is given by
\begin{equation}
\label{dens-asymp}
\begin{split}
\averr{\tau_{x L}} & \approx \rho y_0 (p - q) \frac{\exp(-(p-q) y_0 x)}{1 - \exp(-(p-q) y_0)}, \\
\averr{\tau_{-x L}} & \approx \rho y_0 (p - q) \frac{\exp((p-q) y_0 x)}{\exp((p-q) y_0) - 1}.
\end{split}
\end{equation}

As a test of our formula, we plot the exact density profile ahead of the tracer particle for a large system and the asymptotic formula in Figure~\ref{fig:eg-density-plot}, and we find very good agreement. This falloff in density has also been observed in a similar exclusion process with a tracer particle with adsorption and deposition~\cite[Fig. 2]{benichou-etal-1999}.

\begin{center}
\begin{figure}[htbp!]
\includegraphics[scale=0.4]{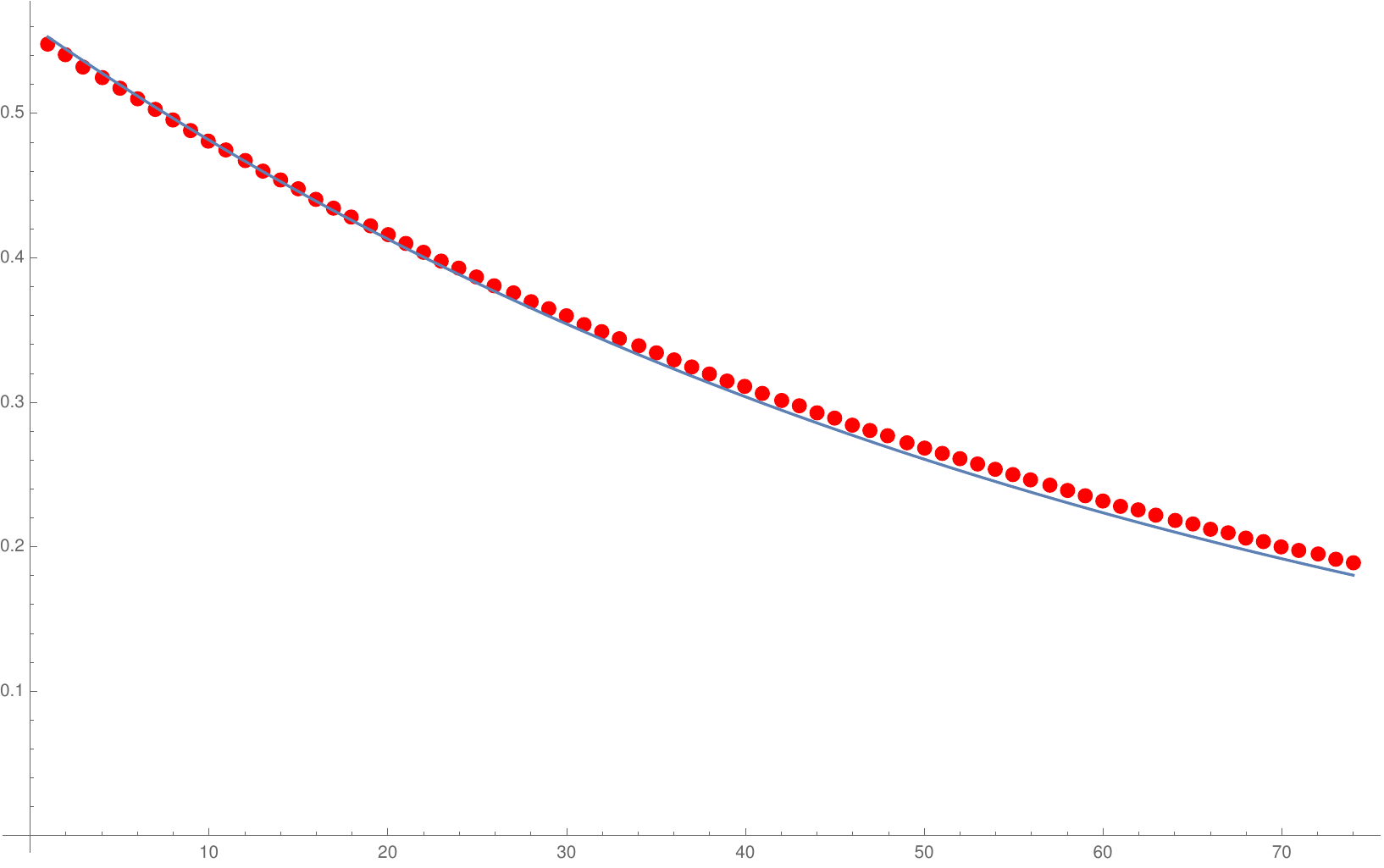}
\caption{A plot of the exact density of particles (red dots) ahead of the tracer particle in a system of size $L=75$ with $n=17$, $p=0.75$ and $q=0.4$, along with the expected curve from \eqref{dens-asymp} (blue curve). }
\label{fig:eg-density-plot}
\end{figure}
\end{center}

Detailed asymptotic studies of the partition function are performed in Section~\ref{sec:asymp-pf}. The analysis of \eqref{dens-asymp} for the interesting special case of $p=1,q=0$ is performed in Section~\ref{sec:asymp-special}

\section{Steady State}
\label{sec:ss}

We now compute the steady state for the model. Since the transition graph of this Markov process is identical to that of the single-species SSEP, it is clear that the process is ergodic if $q \neq 0$. For $q=0$, this can also be shown easily. Hence the steady state is unique.

\begin{proof}[Proof of Theorem~\ref{thm:ss}]
Since the steady state is unique, it suffices to verify \eqref{ss-formula} by the master equation,
\begin{equation}
\label{master-eq}
\sum_{\tau' \in \Omega_{L,n}} \text{rate}(\tau \to \tau') \pi(\tau)
= \sum_{\tau' \in \Omega_{L,n}} \text{rate}(\tau' \to \tau) \pi(\tau').
\end{equation}
The left-hand (resp. right-hand) side is the total outgoing (resp. incoming) contribution from (resp. to) $\tau$.

We write a generic state $\tau$ in block form as
\begin{equation}
\tau = \1 0^{\mu_1} 1^{\nu_1} \dots 0^{\mu_k} 1^{\nu_k} 0^{\mu_{k+1}},
\end{equation}
where $\mu_1, \mu_{k+1} \geq 0$ and all other $\mu_i, \nu_i > 0$.
There are four different kinds of states on whether $\mu_1$, $\mu_{k+1}$, both or none are zero.
We will only consider the case when $\mu_1 > 0$ and $\mu_{k+1} = 0$. The arguments for the other three cases are very similar and can be verified by the interested reader.

The total outgoing contribution in this case to \eqref{master-eq} is $(p+2k-1) \pi(\tau)$. Now let us consider the incoming contribution. For $1 \leq i \leq k-1$, there will be two contributions each involving the $1$'s in the boundary of the $i$'th block of $1$'s. There will be only one contribution for the $k$'th block, and one for the movement of $\1$. Therefore, we have $2k$ terms in the incoming contribution as well. We now analyze these.

Corresponding to the first block of $1$'s, we have
\begin{align*}
\tau^{(1)} &= \1 0^{\mu_1-1} \underset{\rightarrow}{1 0} 1^{\nu_1-1} \dots 0^{\mu_k} 1^{\nu_k} 0^{\mu_{k+1}}, \\
\tau^{(2)} &= \1 0^{\mu_1-1} 1^{\nu_1-1} \underset{\leftarrow}{0 1} 0^{\mu_2-1} \dots 0^{\mu_k} 1^{\nu_k} 0^{\mu_{k+1}},
\end{align*}
both of which make a transition to $\tau$ with rate $1$.
The arrows belows the sites indicate the transition that the $1$ has to make to reach $\tau$.
Applying \eqref{ss-formula}, we see that
\[
\frac{\pi(\tau^{(1)})}{\pi(\tau)} = \frac{1 + p + (n-1)q}{1 + n q}, 
\,
\frac{\pi(\tau^{(2)})}{\pi(\tau)} = \frac{1 + (\nu_1-1)p + (n-\nu_1+1)q}{1 + \nu_1 p + (n - \nu_1)q}.
\]
Considering the contribution to the second block of $1$'s similarly from configurations $\tau^{(3)}, \tau^{(4)}$, we find that
\begin{align*}
\frac{1 \cdot \pi(\tau^{(3)})}{\pi(\tau)} &= \frac{1 + (\nu_1 + 1)p + (n-\nu_1-1)q}{1 + \nu_1 p + (n - n-1)q}, \\
\frac{1 \cdot \pi(\tau^{(4)})}{\pi(\tau)} &= \frac{1 + (\nu_1 + \nu_2 -1)p + (n-\nu_1-\nu_2+1)q}{1 + (\nu_1 + \nu_2) p + (n - \nu_1 - \nu_2)q}.
\end{align*}
Now note that
\[
\frac{\pi(\tau^{(2)}) + \pi(\tau^{(3)})}{\pi(\tau)} = 2.
\]
Continuing this way, we find that the same equation will hold when we sum $\pi(\tau^{(2i)})$ and $\pi(\tau^{(2i+1)})$, for $2 \leq i <k$.
It only remains to analyze the transition from 
\[
\tau^{(2k)} = \underset{\leftarrow}{\;\1} 0^{\mu_1-1} 1^{\nu_1} \dots 0^{\mu_k} 1^{\nu_k} 0
\]
with rate $q$, where we have used Proposition~\ref{prop:ss-trans-inv} to ensure that the first site is $\1$.
Then, we have 
\[
\frac{q \cdot \pi(\tau^{(2k)})}{\pi(\tau)} = q \frac{1 + n p}{1 + n q}, 
\]
so that 
\[
\frac{\pi(\tau^{(1)}) + q \pi(\tau^{(2k)})}{\pi(\tau)} = 
\frac{1 + p + nq + n p q}{1 + n q} = 1 + p.
\]
In summary, the total incoming contribution is $(1+p + 2(k-1)) \pi(\tau)$, which is exactly the outgoing contribution, completing the proof in this case. The other three cases for $\mu_1$ and $\mu_{k+1}$ work in a completely analogous manner.
\end{proof}

By virtue of \eqref{def-ss-wt} and Proposition~\ref{prop:ss-refl-inv}, we have:

\begin{cor}
\label{cor:rpf-symmetric}
The restricted partition function $\rpf L{n+1}pq$ is a symmetric polynomial in $p$ and $q$ with integer coefficients.
\end{cor}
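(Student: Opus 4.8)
The plan is to treat the two assertions of Corollary~\ref{cor:rpf-symmetric} separately, as the hint suggests. Integrality is immediate from \eqref{def-ss-wt}: each weight $w(\tau)$ is a finite product of factors of the form $1 + p\,m_i(\tau) + q\,n_i(\tau)$ with $m_i(\tau), n_i(\tau) \in \mathbb{Z}_{\geq 0}$, hence a polynomial in $p$ and $q$ with nonnegative integer coefficients; since by \eqref{pf-rpf-relation} the restricted partition function is a finite sum of such weights over $\{\tau \in \Omega_{L,n} : \tau_1 = \1\}$, it again lies in $\mathbb{Z}[p,q]$. The substance of the corollary is the symmetry, and for that I would produce an explicit weight-matching bijection on the set $S := \{\tau \in \Omega_{L,n} : \tau_1 = \1\}$ that interchanges the roles of $p$ and $q$.

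Concretely, given $\tau = (\1, \tau_2, \dots, \tau_L) \in S$, I would reflect it to $(\tau_L, \dots, \tau_2, \1)$ and then cyclically rotate the tracer back to the front, obtaining $\tau' := (\1, \tau_L, \tau_{L-1}, \dots, \tau_2)$, so that $\tau'_j = \tau_{L+2-j}$ for $2 \leq j \leq L$. This map is an involution of $S$ preserving the particle number $n$. The key computation is local to each vacancy: a site $i$ with $\tau_i = 0$ is carried to position $j = L+2-i$ in $\tau'$, and since reflection interchanges ``left'' and ``right'', the number of $1$'s to the left of $j$ in $\tau'$ equals $n_i(\tau)$ while the number to its right equals $m_i(\tau)$. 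Hence the factor that this vacancy contributes to the weight of $\tau'$ computed with $p$ and $q$ swapped is $1 + q\,n_i(\tau) + p\,m_i(\tau) = 1 + p\,m_i(\tau) + q\,n_i(\tau)$, which is exactly the factor it contributes to $w(\tau)$ with the original parameters. Taking the product over all vacancies gives $w(\tau)\big\vert_{p,q} = w(\tau')\big\vert_{q,p}$ for every $\tau \in S$.

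Summing this identity over $S$ and using that $\tau \mapsto \tau'$ is a bijection then yields $\rpf L{n+1}pq = \rpf L{n+1}qp$, which is the claimed symmetry. The conceptual justification that this is the correct bijection is precisely Proposition~\ref{prop:ss-refl-inv} together with Proposition~\ref{prop:ss-trans-inv}: reflection composed with the interchange $p \leftrightarrow q$ is a symmetry of the stationary measure, and translation invariance permits the cyclic rotation that normalizes the tracer back to the first site so that Theorem~\ref{thm:ss} applies to both $\tau$ and $\tau'$.

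The main obstacle will be the index bookkeeping of the rotation that restores the tracer to the front after reflecting, and verifying carefully that under the composite map the left/right $1$-counts transform exactly as $m_i(\tau) \leftrightarrow n_i(\tau)$ (here the constancy $m_i(\tau) + n_i(\tau) = n$ is a convenient sanity check). Once this accounting is pinned down, the term-by-term matching of the factors in \eqref{def-ss-wt}, and hence the weight identity $w(\tau)\vert_{p,q} = w(\tau')\vert_{q,p}$, is routine.
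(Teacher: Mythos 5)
Your proposal is correct and follows essentially the same route as the paper: the paper deduces the corollary in one line from the form of the weights in \eqref{def-ss-wt} together with the reflection--$p \leftrightarrow q$ symmetry of Proposition~\ref{prop:ss-refl-inv}, and your reflect-then-rotate bijection with the factor-by-factor matching $1 + p\,m_i(\tau) + q\,n_i(\tau) = 1 + q\,m_j(\tau') + p\,n_j(\tau')\big\vert_{p\leftrightarrow q}$ is precisely the combinatorial content that the paper leaves implicit. Your index bookkeeping (a vacancy at site $i$ going to site $j = L+2-i$, with left/right $1$-counts swapping) checks out, so the argument is complete.
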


From the definition of the weight in \eqref{def-ss-wt}, the following special case immediately follows.

\begin{cor}
\label{cor:ss-p=q}
When $p = q$, the stationary distribution is uniform. Furthermore, it is easy to verify that the process is reversible in that case.
\end{cor}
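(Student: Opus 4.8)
The plan is to read off both claims directly from the explicit weight \eqref{def-ss-wt}, specialised to $p = q$. For uniformity, the key observation is that when $p = q$ the factor attached to each vacancy position collapses to a constant. Fix any $\tau$ with $\tau_1 = \1$ and any position $i \in \{2, \dots, L\}$ with $\tau_i = 0$. Every one of the $n$ particles denoted $1$ lies either strictly to the left or strictly to the right of $i$, so $m_i(\tau) + n_i(\tau) = n$. Hence the factor $1 + p\,m_i(\tau) + q\,n_i(\tau)$ becomes $1 + p\bigl(m_i(\tau) + n_i(\tau)\bigr) = 1 + pn$, which is the same for every $\tau$ and every vacancy position.

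Next I would count the number of factors in the product. Since $\tau_1 = \1$, the positions $2, \dots, L$ contain exactly $n$ particles and therefore $L - 1 - n$ vacancies, a count that does not depend on $\tau$. Combining the two observations gives $w(\tau) = (1 + pn)^{L-1-n}$ for every configuration with $\tau_1 = \1$, independently of $\tau$. By Theorem~\ref{thm:ss} the stationary probability of each such configuration is then the same, and Proposition~\ref{prop:ss-trans-inv} extends this equality to all of $\Omega_{L,n}$; hence $\pi$ is uniform.

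For reversibility, I would argue that at $p = q$ the off-diagonal part of the generator is symmetric. Every elementary move $10 \rightleftharpoons 01$ from \eqref{dynamics-1} carries rate $1$ in both directions, and every move $\1 0 \rightleftharpoons 0 \1$ from \eqref{dynamics-11} carries rate $p = q$ in both directions, so $\text{rate}(\tau \to \tau') = \text{rate}(\tau' \to \tau)$ for every pair of configurations. Since $\pi$ is uniform, the detailed-balance relation $\pi(\tau)\,\text{rate}(\tau \to \tau') = \pi(\tau')\,\text{rate}(\tau' \to \tau)$ holds trivially, and the process is reversible.

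There is no serious obstacle here; the only point requiring care is the bookkeeping in the first two steps, namely verifying simultaneously that the per-vacancy factor and the number of vacancies are both configuration-independent, and then invoking translation invariance to pass from configurations anchored at $\tau_1 = \1$ to arbitrary configurations.
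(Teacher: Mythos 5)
Your proof is correct and is essentially the argument the paper has in mind: the corollary is stated as an immediate consequence of \eqref{def-ss-wt}, since at $p=q$ each vacancy factor collapses to $1+pn$ and the number of factors is the configuration-independent count $L-1-n$, so $w(\tau)=(1+pn)^{L-1-n}$ is constant. Your detailed-balance check for reversibility (symmetric rates plus uniform $\pi$) is exactly the easy verification the paper alludes to.
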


Recall that $\nsetk{n}{k}$ is the Stirling number of the second kind.
For the restricted partition function with $n$ particles and a small number of vacancies, we can show by brute-force computation using \eqref{pf-rpf-relation} that
\begin{equation*}
\begin{split}
\rpf {n+2}{n+1}pq =&  n+1 + \nsetk {n+1}n (p+q) , \\
\rpf {n+3}{n+1}pq  =& \binom {n+2}2 + 3 \binom {n+2}3 (p+q)
+ \nsetk {n+2}n (p^2 + q^2) \\
&+ \frac{3n-1}{2} \binom{n+2}3 p q .
\end{split}
\end{equation*}
But the formulas for the restricted partition function get more complicated as $L-n$ gets larger. We now establish a recurrence that generalizes that of the Stirling numbers in \eqref{stirling-recur}.

\begin{prop}
\label{prop:pf-recur2}
The restricted partition function satisfies the recurrence relation
\begin{multline*}
\rpf L{n+1}pq = (1+n q) \rpf {L-1}{n+1}pq  
+ (1+p)^{L-n-1} \rpf {L-1}{n}{\frac p{1+p}}{\frac q{1+p}},
\end{multline*}
for $L > n \geq 0$, with the initial conditions $\rpf L 1 pq = \rpf L{L}pq = 1$.
\end{prop}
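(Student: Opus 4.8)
The plan is to prove the recurrence bijectively by conditioning a configuration $\tau$ with $\tau_1 = \1$ on the value of its second entry $\tau_2 \in \{0,1\}$, deleting the site at position $2$ in each case, and tracking how the weight \eqref{def-ss-wt} transforms. The reason this splitting is clean is that, since $\tau_1 = \1$, the distinguished particle lies to the left of every other site and is never counted in any $m_i$ or $n_i$; deleting position $2$ therefore interacts with the weight in a controlled way. I will sum over each of the two resulting families and identify the two terms on the right-hand side.

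First I would treat the case $\tau_2 = 0$. The zero at position $2$ has no $1$ to its left and all $n$ ones to its right, so by \eqref{def-ss-wt} it contributes the factor $1 + p\cdot 0 + q\cdot n = 1 + nq$. Deleting this zero produces $\tau' = \1\,\tau_3\cdots\tau_L \in \Omega_{L-1,n}$ with $\tau'_1 = \1$. One checks that every surviving zero keeps the same number of $1$'s on each side (the deleted site carried no $1$), so $w(\tau) = (1+nq)\,w(\tau')$. Since $\tau \mapsto \tau'$ is a bijection onto configurations on $L-1$ sites with $n$ ones that begin with $\1$, summing yields the first term $(1+nq)\,\rpf{L-1}{n+1}pq$.

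The crux is the case $\tau_2 = 1$. Here I delete the $1$ at position $2$ to obtain $\sigma = \1\,\tau_3\cdots\tau_L \in \Omega_{L-1,n-1}$ with $\sigma_1 = \1$. Every zero of $\tau$ loses exactly one $1$ from its left count while its right count is unchanged, so $m_j(\tau) = m_j(\sigma)+1$ and $n_j(\tau) = n_j(\sigma)$. Consequently each factor of $w(\tau)$ equals $1 + p\,(m_j(\sigma)+1) + q\,n_j(\sigma) = (1+p)\bigl(1 + \tfrac{p}{1+p}\,m_j(\sigma) + \tfrac{q}{1+p}\,n_j(\sigma)\bigr)$. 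Since $\sigma$ has exactly $L-n-1$ zeros, pulling the common factor $(1+p)$ out of all of them converts the product into the weight of $\sigma$ evaluated at the rescaled parameters $p/(1+p)$ and $q/(1+p)$, times $(1+p)^{L-n-1}$. Summing over $\sigma$ gives the second term $(1+p)^{L-n-1}\,\rpf{L-1}{n}{\frac{p}{1+p}}{\frac{q}{1+p}}$.

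Adding the two cases yields the stated recurrence. The initial conditions are immediate: $\rpf{L}{1}pq$ counts only $\1\,0^{L-1}$ (all weight factors equal $1$) and $\rpf{L}{L}pq$ counts only $\1\,1^{L-1}$ (empty product), both equal to $1$; with the Stirling-type convention that $\rpf{a}{b}pq$ vanishes for $b<1$ or $b>a$, the degenerate endpoints $L=n+1$ and $n=0$ (where one of the two cases is empty) come out correctly. I expect the only genuinely delicate point to be the algebraic bookkeeping in the $\tau_2=1$ case: one must verify that the uniform shift of every $m_j$ by $1$ is absorbed \emph{exactly} by the reparametrization together with the prefactor $(1+p)^{L-n-1}$ counting precisely the $L-n-1$ zeros, and confirm that both deletion maps leave the side-counts of all surviving zeros invariant.
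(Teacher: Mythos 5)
Your proof is correct and takes essentially the same approach as the paper: the paper also splits configurations by whether they begin with $\1 0$ or $\1 1$, extracts the factor $1+nq$ in the first case, and in the second case observes that each weight factor shifts by $p$, which is absorbed by the reparametrization $p \mapsto p/(1+p)$, $q \mapsto q/(1+p)$ together with the prefactor $(1+p)^{L-n-1}$. Your deletion maps are precisely the inverses of the paper's insertion maps, and your count of $L-n-1$ zeros is the correct one (the paper's proof contains a harmless miscount of ``$L-n$'' zeros at this point).
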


\begin{proof}
Decompose the set of words in $\Omega_{L,n}$ beginning with $\hat{1}$ into those beginning with $\hat{1}0$ and those with $\hat{1}1$. From the definition of $w$ in \eqref{def-ss-wt}, it is clear that the contribution of the first subset gives the first term on the right-hand side, the extra $0$ in front contributing $1+nq$.

The second subset requires a little more care. Consider a word $\tau = (\1, \tau_2, \dots, \tau_{L-1}) \in \Omega_{L-1,n-1}$, and form $\tau' \in \Omega_{L,n}$ as $\tau' = (\1, 1, \tau_2, \dots, \allowbreak \tau_{L-1})$. Comparing $w(\tau')$ and $w(\tau)$, we see that they have the same number of factors since the number of $0$'s in both is $L-n$. Moreover, each factor in $w(\tau')$ is $p$ plus the corresponding factor in $w(\tau)$. It is now easy to see that replacing $p$ by $p/(1+p)$ and $q$ by $q/(1+p)$ in $w(\tau)$ will give rise to the same factor as $w(\tau')$ except for an overall power of $1+p$. This proves the result.
\end{proof}

The column generating function for the Stirling numbers is given by\cite[Section 1.6]{wilf-1994} the product formula,
\begin{equation}
\label{stirling-columngf}
\sum_{n \geq k} \nsetk n k x^n = \frac{x^k}{(1-x)(1-2x) \cdots (1-kx)}.
\end{equation}
Now consider the generating function
\[
F_{n}(x) = \sum_{L = n+1}^\infty \rpf {L}{n+1}pq x^{L-n-1}
\]
in the formal variable $x$.
Then we have the following two-variable generalization of \eqref{stirling-columngf}.

\begin{thm}
\label{thm:rpf-ogf}
The generating function $F_{n}(x)$ is given by
\[
F_{n}(x) = \prod_{j=0}^{n} \frac{1}{1 - (1 + j p + (n - j) q) x}.
\]
\end{thm}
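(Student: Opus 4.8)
The plan is to convert the recurrence of Proposition~\ref{prop:pf-recur2} into a functional recurrence for the generating functions and then induct on $n$. Because that recurrence rescales the parameters in its second term, I will temporarily make the parameter dependence explicit and write $F_n(x;p,q)$ for the generating function, viewing everything as formal power series in $x$ with coefficients in $\mathbb{Z}[p,q]$.

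First I would multiply the recurrence by $x^{L-n-1}$ and sum over $L \geq n+1$. The left-hand side reproduces $F_n(x;p,q)$. In the first term on the right, the index shift $L \mapsto L-1$ extracts one power of $x$ (the missing boundary term $\rpf{n}{n+1}pq$ being out of range, hence zero) and yields $(1+nq)\,x\,F_n(x;p,q)$. In the second term the same shift $L\mapsto L-1$ combines the prefactor $(1+p)^{L-n-1}$ with $x^{L-n-1}$ into $\big((1+p)x\big)^{L-n}$, so that the whole sum is recognized as $F_{n-1}$ evaluated at the argument $(1+p)x$ and at the rescaled parameters $p/(1+p)$ and $q/(1+p)$. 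Solving the resulting identity for $F_n$ gives the clean relation $F_n(x;p,q) = \frac{1}{1-(1+nq)x}\,F_{n-1}\big((1+p)x;\,\tfrac{p}{1+p},\tfrac{q}{1+p}\big)$.

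With this relation in hand I would induct on $n$. The base case $n=0$ is immediate: since $\rpf L1pq = 1$, we get $F_0(x;p,q)=\sum_{L\geq 1}x^{L-1}=1/(1-x)$, which is exactly the single factor ($j=0$) of the claimed product. For the inductive step I substitute the product formula for $F_{n-1}$ and simplify the typical denominator: the factor $1-\big(1+j\tfrac{p}{1+p}+(n-1-j)\tfrac{q}{1+p}\big)(1+p)x$ collapses to $1-\big(1+(j+1)p+(n-1-j)q\big)x$ once the $(1+p)$ is cleared. Reindexing $j\mapsto j+1$ rewrites the product as $\prod_{j=1}^{n}\big(1-(1+jp+(n-j)q)x\big)^{-1}$, and multiplying by the prefactor $1/(1-(1+nq)x)$ — precisely the missing $j=0$ factor — completes the product $\prod_{j=0}^{n}$, as required.

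The only delicate point is the bookkeeping in the second term: one must see that the parameter rescaling $p\mapsto p/(1+p)$, the geometric prefactor $(1+p)^{L-n-1}$, and the index shift conspire exactly to produce $F_{n-1}$ at argument $(1+p)x$, and then that clearing the common factor $(1+p)$ in each denominator shifts the linear form $jp+(n-1-j)q$ to $(j+1)p+(n-1-j)q$, which is what enables the reindexing that absorbs the prefactor into the product. Everything else is routine manipulation of geometric series.
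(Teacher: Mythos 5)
Your proof is correct, but it takes a genuinely different route from the paper. The paper proves the theorem directly: a configuration $\1 0^{m_0} 1 0^{m_1} \cdots 1\, 0^{m_n}$ has weight $\prod_{j=0}^{n} (1+jp+(n-j)q)^{m_j}$ by \eqref{def-ss-wt}, so $F_n(x)$ splits on the spot into a product of $n+1$ independent geometric series, one for each gap of zeros; each factor of the product is thereby identified with a specific gap. You instead take Proposition~\ref{prop:pf-recur2} as the starting point, convert it into the functional equation
\[
F_n(x;p,q) \;=\; \frac{1}{1-(1+nq)x}\, F_{n-1}\!\left((1+p)x;\, \tfrac{p}{1+p},\, \tfrac{q}{1+p}\right),
\]
and peel off one factor per induction step. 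Your bookkeeping is sound at every delicate point: the boundary term $\rpf{n}{n+1}pq$ indeed vanishes (no configuration fits $n$ ones plus the tracer into $n$ sites), the identity $\bigl(1+j\tfrac{p}{1+p}+(n-1-j)\tfrac{q}{1+p}\bigr)(1+p) = 1+(j+1)p+(n-1-j)q$ is right, and invoking the inductive hypothesis at the rescaled parameters is legitimate because the identity holds identically in $p,q$ (a point you correctly flag by writing $F_n(x;p,q)$). The trade-off: the paper's argument is shorter and explains \emph{why} the product appears (each factor is a gap), whereas yours shows the formula is completely forced by the recurrence and initial conditions alone, at the cost of threading the parameter rescaling through the induction; note also that your route still rests on the weight formula \eqref{def-ss-wt}, but only indirectly, through the proof of Proposition~\ref{prop:pf-recur2}.
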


\begin{proof}
Clearly, the configurations are specified by the position of the $n$ $1$'s.
Using \eqref{def-ss-wt}, the desired generating function is
\begin{align*}
F_{n}(x) =& \sum_{m_0,\dots,m_n \geq 0} w( \hat{1} 0^{m_0} 1 0^{m_1} 
\dots 0^{m_{n-1}} 1 0^{m_n} ) x^n\\
=& \left( \sum_{m_0 \geq 0} (1 + n q)^{m_0} x^{m_0} \right)
\left( \sum_{m_1 \geq 0} (1 + p + (n-1) q)^{m_1} x^{m_1} \right) 
\dots \\
& \times \left(\sum_{m_n \geq 0} (1 + n p)^{m_n} x^{m_n} \right),
\end{align*}
which is easily summed to obtain the desired result.
\end{proof}

From the column generating function of the Stirling numbers in \eqref{stirling-columngf} one can derive the explicit formula (see 
\cite[Section 1.6]{wilf-1994})
\begin{equation}
\label{stirling-formula}
\nsetk n k = \sum_{r=1}^k (-1)^{k-r} \frac{r^n}{r!(k-r)!},
\end{equation}
which has the remarkable property of holding true even when $n < k$.
As an immediate corollary of Theorem~\ref{thm:rpf-ogf}, we find a similar expression for the restricted partition function.

\begin{cor}
\label{cor:rpf-formula}
The restricted partition function is given by
\begin{equation}
\label{rpf-formula}
\rpf {L}{n+1}pq = \sum_{j=0}^{n} \frac{(-1)^{n-j}}{j! (n-j)!}
\frac{(1 + j p + (n-j) q)^{L-1}}{(p-q)^{n}}.
\end{equation}
\end{cor}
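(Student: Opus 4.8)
The plan is to extract the coefficient of $x^{L-n-1}$ in the product formula for $F_n(x)$ from Theorem~\ref{thm:rpf-ogf} by a partial fraction expansion, exactly in parallel with the way the explicit Stirling formula \eqref{stirling-formula} is derived from the column generating function \eqref{stirling-columngf}. Write $a_j := 1 + j p + (n-j) q$ for $0 \le j \le n$, so that $F_n(x) = \prod_{j=0}^n (1 - a_j x)^{-1}$. The key elementary observation is that $a_j - a_i = (j-i)(p-q)$, so the reciprocals $a_0,\dots,a_n$ are pairwise distinct precisely when $p \ne q$. I would work throughout in this regime, which is the main case of interest, and recover $p=q$ at the end by continuity.

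Assuming the $a_j$ are distinct, I would decompose $F_n(x) = \sum_{j=0}^n c_j (1 - a_j x)^{-1}$, where the residue is $c_j = \lim_{x \to 1/a_j} (1 - a_j x) F_n(x) = \prod_{i \ne j} (1 - a_i/a_j)^{-1} = a_j^n \prod_{i \ne j} (a_j - a_i)^{-1}$. Substituting $a_j - a_i = (j-i)(p-q)$ factors out $(p-q)^n$ and leaves $\prod_{i \ne j}(j-i)$, which equals the signed factorial $(-1)^{n-j}\, j!\,(n-j)!$ once one notes that, as $i$ ranges over $\{0,\dots,n\}\setminus\{j\}$, the differences $j-i$ run over $\{1,\dots,j\}$ together with $\{-1,\dots,-(n-j)\}$. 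This yields the closed form $c_j = (-1)^{n-j} a_j^n / \big( j!\,(n-j)!\,(p-q)^n \big)$.

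With the partial fractions in hand the extraction is immediate: since $(1 - a_j x)^{-1} = \sum_{m \ge 0} a_j^m x^m$, the coefficient of $x^{L-n-1}$ in $F_n(x)$ is $\sum_{j=0}^n c_j a_j^{L-n-1}$, which by the definition of $F_n$ equals $\rpf L{n+1}pq$. Combining $c_j a_j^{L-n-1} = \frac{(-1)^{n-j}}{j!(n-j)!} \frac{a_j^{L-1}}{(p-q)^n}$ with $a_j = 1 + j p + (n-j) q$ then gives exactly \eqref{rpf-formula}.

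The computation above is routine, so the only point requiring genuine care — and the thing I would flag as the main obstacle — is the apparent singularity at $p=q$: the factor $(p-q)^n$ in the denominator suggests a pole, yet by Corollary~\ref{cor:rpf-symmetric} the left-hand side is a bona fide polynomial in $p$ and $q$. The resolution is that the singularities cancel, i.e. the numerator $N(p,q) := \sum_{j=0}^n (-1)^{n-j} a_j^{L-1} / \big( j!(n-j)! \big)$ is divisible by $(p-q)^n$. A quick sanity check is that at $p=q$ every $a_j$ collapses to $1 + n q$, whence $N = (1+nq)^{L-1} \sum_{j=0}^n (-1)^{n-j}/(j!(n-j)!) = (1+nq)^{L-1} (1-1)^n / n! = 0$ for $n \ge 1$; the full divisibility follows similarly from the higher Taylor coefficients, or simply by observing that the identity of rational functions holds on the open region $p \ne q$ while the polynomial $\rpf L{n+1}pq$ extends continuously, so its value at $p=q$ is recovered as the limit. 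I would therefore state the formula for $p \ne q$ and read off the $p=q$ case as a limit.
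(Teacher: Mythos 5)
Your proof is correct and takes essentially the same route as the paper's: a partial-fraction expansion of $F_n(x)$ from Theorem~\ref{thm:rpf-ogf}, computation of the residues $c_j = \frac{(-1)^{n-j}}{j!(n-j)!}\left(\frac{1+jp+(n-j)q}{p-q}\right)^n$, and extraction of the coefficient of $x^{L-n-1}$ via the geometric series. Your explicit evaluation of the residues and your closing remark on the removable singularity at $p=q$ simply fill in details the paper dismisses as ``standard calculations.''
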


\begin{proof}
Expand $F_{n}(x)$ in Theorem~\ref{thm:rpf-ogf} using partial fractions,
\begin{equation}
\label{pf-parfrac}
F_n(x) = \sum_{j=0}^n \frac{c_j}{1 - (1 + j p + (n - j) q) x}.
\end{equation}
Standard calculations show that 
\[
c_j = \frac{(-1)^{n-j}}{j! (n-j)!}
\left( \frac{1 + j p + (n - j) q}{p-q} \right)^n
\]
Expand the right hand side of \eqref{pf-parfrac} as a geometric series,
\[
F_n(x) = \sum_{m=0}^\infty \sum_{j=0}^n c_j (1 + j p + (n - j) q)^m x^m.
\]
The coefficient of $x^m$ is precisely $\rpf L{n+1}pq$ and this completes the proof. 
\end{proof}

We note that although Corollary~\ref{cor:rpf-formula} is explicit, it is not useful for asymptotic computations because the summands are not all positive. However, it is very useful for fast exact computations on a computer for numerical values of $p$ and $q$.

\begin{rem}
\label{rem:gf-property}
Corollary~\ref{cor:rpf-formula} can be seen as a two-variable generalization of \eqref{stirling-formula}. Moreover, 
the formula for the partition function in \eqref{rpf-formula} also gives the correct answer, namely $0$, when $L \leq n$ in complete analogy with \eqref{stirling-formula}. This is clear from the definition of the generating function $F_n(x)$, but it is not so obvious from \eqref{rpf-formula}.
\end{rem}

The mixed bivariate generating function of the Stirling numbers is given by the remarkably simple expression~\cite[Section 1.6]{wilf-1994},
\begin{equation}
\label{stirling-mixedgf}
\sum_{n \geq 0} \sum_{k=0}^n \nsetk n k \frac{x^n}{n!} y^k = \exp(y(\exp(x)-1)).
\end{equation}
Theorem~\ref{thm:rpf-egf} is a two-variable generalization of \eqref{stirling-mixedgf}, which we are now ready to prove.

\begin{proof}[Proof of Theorem~\ref{thm:rpf-egf}]
We first simplify the inner sum using Corollary~\ref{cor:rpf-formula},
\begin{align*}
\sum_{n = 0}^{L-1} \rpf {L}{n+1}pq  x^n =&
\sum_{n = 0}^{L-1} \sum_{j = 0}^n \frac{(-1)^{n-j}}{j! (n-j)!}
\frac{(1 + j p + (n - j) q)^{L-1}}{(p-q)^n} x^n \\
=& \sum_{j = 0}^{L-1} \sum_{n=j}^{L-1} \frac{(-1)^{n-j}}{j!(n-j)!}
\frac{(1 + j p + (n - j) q)^{L-1}}{(p-q)^n} x^n \\
=& \sum_{j = 0}^L \sum_{m=0}^{L-j} \frac{(-1)^m}{j! m!}
\frac{(1 + j p + m q)^{L-1}}{(p-q)^{j+m}}  x^{j+m},
\end{align*}
where we have interchanged the sums in the first step and shifted the inner sum in the second step. We now use Remark~\ref{rem:gf-property} to replace the index $L-1$ by $M$ for any $M$ larger than $L-1$. Since the equality holds for arbitrarily large values of $M$, we take the limit $M \to \infty$ to obtain
\[
\sum_{n = 0}^{L-1} \rpf {L}{n+1}pq  x^n =
\sum_{j = 0}^\infty \sum_{m=0}^{\infty} \frac{(-1)^m}{j! m!}
\frac{(1 + j p + m q)^{L-1}}{(p-q)^{j+m}}  x^{j+m}.
\]
We now plug this in \eqref{rpf-egf} to see that the left-hand side equals
\begin{align*}
& \sum_{L = 1}^\infty \sum_{j = 0}^\infty \sum_{m=0}^{\infty} 
\frac{(-1)^m}{j! m!} \frac{(1 + j p + m q)^{L-1}}{(p-q)^{j+m}}  x^{j+m} \frac{y^{L-1}}{(L-1)!} \\
&= \sum_{j = 0}^\infty \sum_{m=0}^{\infty} 
\frac{(-1)^m}{j! m!} \frac{x^{j+m}}{(p-q)^{j+m}} \exp(y(1 + j p + m q)) \\
&= \exp(y) \left( \sum_{j = 0}^\infty 
\frac{\exp(y p j) x^j}{j!(p-q)^j} \right)
\left( \sum_{m=0}^{\infty} 
\frac{(-1)^m \exp(y q m) x^m}{m!(p-q)^m} \right).
\end{align*}
These exponential sums are easily performed, leading to the desired result.
\end{proof}

\subsection{Special case: \texorpdfstring{$q=0$}{q=0}}
\label{sec:ss-q=0}

When the tracer particle moves totally asymmetrically ($q=0$), the 
restricted partition function $\rpf {L}{n+1}pq$ takes on a particularly combinatorial flavor. 
We generalize the Stirling numbers by setting
\[
\nsetk n{k}_p = \sum_{\pi \in \nsetk{[n]}{k}} p^{\#\{\text{number of elements in the subset containing 1 in $\pi$}\}-1}.
\]
Continuing the example in Section~\ref{sec:def}, $\nsetk 42_p = 1 + 3p + 3p^2$.
It can be easily verified that these polynomials satisfy a triangular recurrence relation (just like \eqref{stirling-recur} is satisfied by the Stirling numbers), 
\begin{equation}
\label{genstirling-recur}
\nsetk{n+1}k_p = \nsetk n{k-1}_p + (k-1+p) \nsetk nk_p,
\end{equation}
with initial conditions $\nsetk nn_p =1 $ and $\nsetk n1_p = p^{n-1}$.
Then we have the following result.

\begin{thm}
\label{thm:rpf-special-nsetk}
The restricted partition function in the case $q=0$ is given by
\[
\rpf {L}{n+1}p0 = p^{L-n-1} \nsetk {L}{n+1}_{1/p}.
\]
\end{thm}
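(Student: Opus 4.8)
The plan is to prove the identity at the level of generating functions, by showing that the $q=0$ specialisation of the ordinary generating function $F_n(x)$ from Theorem~\ref{thm:rpf-ogf} coincides, after the substitution $p \mapsto 1/p$ together with a rescaling of the variable, with the column generating function of the polynomials $\nsetk{L}{n+1}_{1/p}$.

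First I would compute the column generating function of the generalised Stirling numbers $\nsetk{n}{k}_p$ directly from the recurrence \eqref{genstirling-recur}. Setting $G_k(x) = \sum_{n \geq k} \nsetk{n}{k}_p x^n$ with the convention $\nsetk{n}{k}_p = 0$ for $n < k$, the recurrence \eqref{genstirling-recur} translates into the functional equation
\[
G_k(x) = x\, G_{k-1}(x) + (k-1+p)\, x\, G_k(x),
\]
so that $G_k(x) = \frac{x}{1-(k-1+p)x}\, G_{k-1}(x)$. Combined with the base case $G_1(x) = \sum_{n\geq 1} p^{n-1} x^n = \frac{x}{1-px}$, which follows from $\nsetk{n}{1}_p = p^{n-1}$, this telescopes to
\[
\sum_{n \geq k} \nsetk{n}{k}_p x^n = \frac{x^k}{\prod_{i=0}^{k-1}\bigl(1 - (i+p)x\bigr)}.
\]
This is the natural two-variable analogue of \eqref{stirling-columngf}, which it recovers at $p=1$.

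Next I would specialise Theorem~\ref{thm:rpf-ogf} to $q=0$, giving $F_n(x)\big|_{q=0} = \prod_{j=0}^{n} \bigl(1 - (1+jp)x\bigr)^{-1}$. The key step is then to match this with the product formula above, evaluated at parameter $1/p$ and index $k=n+1$. Extracting the factor $p^{L-n-1}$ amounts to the substitution $x \mapsto px$: I would compute
\[
\sum_{L \geq n+1} p^{L-n-1} \nsetk{L}{n+1}_{1/p}\, x^{L-n-1}
= (px)^{-(n+1)} \sum_{L\geq n+1} \nsetk{L}{n+1}_{1/p}\, (px)^L,
\]
and then insert the product formula with $k=n+1$, parameter $1/p$, and variable $px$. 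Using the identity $(i + 1/p)\,px = (1+ip)x$, the numerator $(px)^{n+1}$ cancels the prefactor $(px)^{-(n+1)}$ and one is left with exactly $\prod_{i=0}^{n}\bigl(1-(1+ip)x\bigr)^{-1}$, which is precisely $F_n(x)\big|_{q=0}$. Comparing coefficients of $x^{L-n-1}$ on the two sides then yields the claimed formula.

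The only delicate points are bookkeeping: correctly transcribing the recurrence \eqref{genstirling-recur} (in particular the shifted coefficient $k-1+p$ rather than $k$) into the functional equation for $G_k$, and tracking the powers of $p$ and the index shifts through the substitution $x \mapsto px$. I expect the main obstacle to be purely this substitution arithmetic; once the two product formulas are lined up, equality is immediate, since both sides are rational functions with the same denominator $\prod_{i=0}^{n}\bigl(1-(1+ip)x\bigr)$. As a sanity check I would set $p=1$, where the identity collapses to $\rpf L{n+1}10 = \nsetk{L}{n+1}$, recovering Corollary~\ref{cor:pf-p1q0}.
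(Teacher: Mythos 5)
Your proof is correct, and it takes a genuinely different route from the paper's. The paper argues by double induction at the level of configurations: it peels off the site immediately after the tracer, observing that appending a $1$ leaves the weight unchanged when $q=0$ while appending a $0$ multiplies it by $1+np$, which yields the recurrence $\rpf L{n+1}p0 = \rpf {L-1}{n}p0 + (1+np)\rpf{L-1}{n+1}p0$; it then checks the initial cases $n=0$, $n=L-1$ and matches this recurrence with \eqref{genstirling-recur} under $p \mapsto 1/p$. You instead work entirely with generating functions: you convert \eqref{genstirling-recur} into the product formula
\[
\sum_{n \geq k} \nsetk{n}{k}_p x^n = \frac{x^k}{\prod_{i=0}^{k-1}\bigl(1-(i+p)x\bigr)},
\]
and match it, after the substitutions $p \mapsto 1/p$ and $x \mapsto px$, with the $q=0$ specialisation of Theorem~\ref{thm:rpf-ogf}. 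This is legitimate and non-circular, since Theorem~\ref{thm:rpf-ogf} is proved independently from the weight formula \eqref{def-ss-wt}. The trade-off: the paper's argument is self-contained and exhibits the combinatorial mechanism (the correspondence between appending symbols and the Stirling-type recurrence) directly, whereas your argument outsources all the combinatorics to Theorem~\ref{thm:rpf-ogf} and, as a by-product, produces the column generating function of the generalised Stirling polynomials $\nsetk{n}{k}_p$ --- the natural $p$-analogue of \eqref{stirling-columngf}, which the paper never states explicitly. Both proofs ultimately rest on the recurrence \eqref{genstirling-recur}, which the paper asserts without proof and which you likewise take as given.
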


\begin{proof}
We will verify that both sides satisfy the same recurrence relation and initial conditions. When $n = L-1$ and $n=0$, the right hand side is 1. 
Let us consider the left-hand sides. 
The only configuration in $\Omega_{L,L-1}$ is $\1 1 \dots 1$, and $\pi(\1 1 \dots 1) = 1$ by \eqref{ss-formula}.  Similarly, $\Omega_{L-n,0}$ has a single configuration with $\pi( \1 0 \dots 0) = 1$ by \eqref{ss-formula}. Thus, the left-hand side is also 1 in both cases.

Now, every configuration $\tau$ in $\Omega_{L,n}$ with $\tau_1 = \1$ is obtained by either appending $1$ to a configuration $\tau'$ in $\Omega_{L-1,n-1}$ with $\tau'_1 = \1$ or $0$ to a configuration $\tau''$ in $\Omega_{L-1,n}$ with $\tau''_1 = \1$.
In the former case, $\pi(\tau) = \pi(\tau')$ by \eqref{ss-formula} since $q=0$. In the latter case, $\pi(\tau) = (1 + n p) \pi(\tau'')$. We therefore obtain
\[
\rpf L{n+1}p0 = \rpf {L-1}{n}p0 + (1 + n p) \rpf {L-1}{n+1}p0,
\]
and complete the proof by noting that this is equivalent to \eqref{genstirling-recur}.
\end{proof}

Corollary~\ref{cor:pf-p1q0} now follows by setting $p=1$ in Theorem~\ref{thm:rpf-special-nsetk}. It can also be seen directly from \eqref{stirling-columngf} and Theorem~\ref{thm:rpf-ogf}

\section{Correlation functions in steady state}
\label{sec:corrfn}

The current of a particle across a bond is the amount per unit time it jumps across the bond in the forward direction minus that in the reverse direction. 
Let $\sigma$ (resp. $\tau$) denote the occupation variable for $\1$ (resp. $1$). That is to say, $\sigma_i = 1$ (resp. $\tau_i = 1$) if and only if the 
$i$'th site is occupied by a $\1$ (resp. $1$), and otherwise $\sigma_i$ (resp. $\tau_i$) is zero. 
We will denote averages with respect to the stationary distribution using angular brackets, $\aver{\cdot}_{L,n}$.

\begin{lem}
\label{lem:current-corr}
For $L > n \geq 0$ and $2 \leq i \leq L-1$, we have
\[
\aver{\sigma_1 \tau_i (1-\tau_{i+1})}_{L,n} - \aver{\sigma_1 (1-\tau_i) \tau_{i+1}}_{L,n}
= (p-q) \frac{\rpf {L-1}{n+1}pq}{Z_{L,n}(p,q)} \aver{\sigma_1 \tau_i}_{L-1,n}.
\]
\end{lem}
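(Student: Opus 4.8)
The plan is to evaluate both averages on the left by expanding them, via Theorem~\ref{thm:ss}, as weighted sums over configurations with the tracer pinned at site $1$, and then to pair configurations that differ by a single nearest-neighbour swap across the bond $(i,i+1)$. First I would note that since $\sigma_1$ forces the tracer onto site $1$, for $1 \le i \le L-1$ the factor $1-\tau_{i+1}$ simply records a vacancy at site $i+1$ (the tracer cannot sit there, and $i+1 \neq 1$). Hence $\aver{\sigma_1 \tau_i(1-\tau_{i+1})}_{L,n}$ is $Z_{L,n}^{-1}$ times the sum of $w(\tau)$ over configurations $\tau$ with $\tau_1=\1$ carrying the local pattern $10$ on the bond $(i,i+1)$, while $\aver{\sigma_1(1-\tau_i)\tau_{i+1}}_{L,n}$ is the analogous sum over the pattern $01$. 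These two families are in bijection by swapping the adjacent $1$ and $0$, so the left-hand side becomes $Z_{L,n}^{-1}\sum_{\tau}\bigl(w(\tau)-w(\tau')\bigr)$, with the sum running over the $10$-configurations $\tau$ and their $01$-partners $\tau'$.

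The heart of the argument is the weight comparison. Using \eqref{def-ss-wt}, I would check that $w(\tau)$ and $w(\tau')$ share every factor except the one coming from the single vacancy lying on the bond $(i,i+1)$: moving the $1$ across this bond changes neither the number of $1$'s to the left nor to the right of any other vacancy, since for a vacancy at $k<i$ the displaced $1$ stays to its right and for $k>i+1$ it stays to its left. Writing $m$ for the number of $1$'s strictly to the left of site $i$, the vacancy at site $i+1$ in $\tau$ contributes the factor $1+p(m+1)+q(n-1-m)$, whereas the vacancy at site $i$ in $\tau'$ contributes $1+pm+q(n-m)$; their difference is exactly $p-q$, independently of $m$. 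Thus $w(\tau)-w(\tau') = (p-q)\,W$, where $W$ is the common product of the remaining factors, and it is precisely the additive structure of \eqref{def-ss-wt} that forces this clean cancellation.

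Finally I would identify $W$ as a weight on the shorter lattice. Deleting the bond vacancy (site $i+1$ in $\tau$, equivalently site $i$ in $\tau'$) produces a configuration $\tilde\tau\in\Omega_{L-1,n}$ with $\tilde\tau_1=\1$ and $\tilde\tau_i=1$, and one checks from \eqref{def-ss-wt} that $W=w(\tilde\tau)$, because removing a vacancy leaves every surviving left/right $1$-count untouched. This contraction is a bijection onto $\{\tilde\tau\in\Omega_{L-1,n}: \tilde\tau_1=\1,\ \tilde\tau_i=1\}$, so summing over pairs gives $\sum_{\tau}\bigl(w(\tau)-w(\tau')\bigr)=(p-q)\sum_{\tilde\tau}w(\tilde\tau)$. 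Recognising $\sum_{\tilde\tau}w(\tilde\tau)=\rpf{L-1}{n+1}pq\,\aver{\sigma_1\tau_i}_{L-1,n}$ and dividing by $Z_{L,n}$ yields the stated identity.

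The step demanding the most care is this last one. One must track the normalisations of the two ensembles consistently: the restricted partition function $\rpf{L-1}{n+1}pq$ (which counts only tracer-at-origin configurations of the size-$(L-1)$ system) appears on the right, whereas the full $Z_{L,n}$ appears as the overall denominator. Keeping straight the relation $Z_{M,n}=M\,\rpf{M}{n+1}pq$ from \eqref{pf-rpf-relation}, together with the convention under which the size-$(L-1)$ average $\aver{\sigma_1\tau_i}_{L-1,n}$ is read, is exactly what determines whether the prefactor is $\rpf{L-1}{n+1}pq/Z_{L,n}$ as stated or differs by a factor counting the tracer's admissible positions; verifying it against the small case $L=3$, $n=1$ is a useful sanity check.
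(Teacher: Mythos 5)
Your argument is, in substance, the paper's own proof: it pairs $\1\alpha_1 10\alpha_2$ with $\1\alpha_1 01\alpha_2$, observes that the weights differ by exactly $(p-q)\,w(\1\alpha_1 1\alpha_2)$, and identifies the contracted configurations with the elements of $\Omega_{L-1,n}$ having the tracer at site $1$ and a $1$ at site $i$. Your factor-by-factor verification of the weight identity (only the factor of the bond vacancy changes, and its change is $p-q$ independently of $m$) is correct and fills in what the paper leaves implicit.

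The caveat you raise in your final paragraph, however, is not merely a matter of care: it is a genuine discrepancy, and your identification $\sum_{\tilde\tau} w(\tilde\tau) = \rpf{L-1}{n+1}pq\,\aver{\sigma_1\tau_i}_{L-1,n}$ is exactly where it enters. Under the paper's stated convention that $\aver{\cdot}_{L-1,n}$ is the stationary average on $L-1$ sites, one has $\sum_{\tilde\tau} w(\tilde\tau) = Z_{L-1,n}\,\aver{\sigma_1\tau_i}_{L-1,n} = (L-1)\,\rpf{L-1}{n+1}pq\,\aver{\sigma_1\tau_i}_{L-1,n}$, so what your pairing argument actually proves is the identity with prefactor $(p-q)\,Z_{L-1,n}/Z_{L,n}$. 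The sanity check you propose confirms this: for $L=3$, $n=1$, $i=2$, the left-hand side is $\pi(\1 1 0)-\pi(\1 0 1) = (p-q)/\bigl(3(p+q+2)\bigr)$, while the right-hand side as stated is $(p-q)\cdot\tfrac12/\bigl(3(p+q+2)\bigr)$, smaller by the factor $L-1=2$. Thus the lemma as printed holds only if $\aver{\sigma_1\tau_i}_{L-1,n}$ is read as the expectation \emph{conditioned} on the tracer occupying site $1$ of the $(L-1)$-site system, i.e.\ normalised by $\rpf{L-1}{n+1}pq$ rather than by $Z_{L-1,n}$. The paper's own proof makes the same silent identification, and the factor $L-1$ is compensated downstream: in the proof of Theorem~\ref{thm:current}, $\sum_k \aver{\sigma_1\tau_k}_{L-1,n}$ is taken to equal $n$, which is its value under the conditional reading (it is $n/(L-1)$ under the literal one), so the current formula itself is correct. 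To make your write-up airtight, stop at $\sum_{\tilde\tau} w(\tilde\tau) = Z_{L-1,n}\,\aver{\sigma_1\tau_i}_{L-1,n}$ and state explicitly which normalisation the right-hand side of the lemma is meant to carry.
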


\begin{proof}
From the definitions,
\begin{multline*}
\aver{\sigma_1 \tau_k (1 - \tau_{k+1})}_{L,n}
- \aver{\sigma_1 (1 - \tau_{k}) \tau_{k+1} }_{L,n}  \\
= \frac{1}{ Z_{L,n}(p,q)}
\sum_{\alpha_1, \alpha_2} \left( w(\1 \alpha_1 10 \alpha_2) - w(\1 \alpha_1 01 \alpha_2) \right),
\end{multline*}
where the sum is over words $\alpha_1,\alpha_2$ in $\{0,1\}$ of lengths $(k-2)$ and $(L-1-k)$ respectively with a total of $(L-n-2)$ $0$'s and $(n-1)$ $1$'s.
The crucial observation is that
\[
w(\1 \alpha_1 10 \alpha_2) - w(\1 \alpha_1 01 \alpha_2) = (p-q) w(\1 \alpha_1 1 \alpha_2)
\]
for all possible words $\alpha_1,\alpha_2$, using \eqref{def-ss-wt}.
Now, each configuration on the right hand side is an element of $\Omega_{L-1,n}$ and conversely, every configuration in $\Omega_{L-1,n}$ occurs exactly once as $\alpha_1, \alpha_2$ vary.
Thus the sum is $\rpf {L-1}{n+1}pq \aver{\sigma_1 \tau_i}_{L-1,n}$, 
proving the result.
\end{proof}

We are now in a position to prove the formula for the current.
Recall that the current of $1$'s and $\1$'s in the steady state is denoted by $J_1$ and $J_{\1}$, respectively.

\begin{proof}[Proof of Theorem~\ref{thm:current}]
We first consider the current of $\1$ across a bond. Since the bond itself does not matter, we fix the bond to be the one between sites $1$ and $L$. Then the current of $\1$ is given by
\[
J_{\1} = p \aver{\sigma_{L} (1 - \tau_1 - \sigma_1)}_{L,n}
- q \aver{(1 - \tau_{L} - \sigma_{L}) \sigma_1}_{L,n}.
\]
By translation invariance of the stationary distribution (see Proposition~\ref{prop:ss-trans-inv}), we obtain
\[
J_{\1} = \frac{1}{Z_{L,n}(p,q)}
\sum_\alpha \left( p w(\1 0 \alpha) - q w(\1 \alpha 0) \right),
\]
where $w$ is the weight function defined in \eqref{def-ss-wt} and the sum is over all words $\alpha$ with $(L-n-2)$ $0$'s and $n$ $1$'s. Using the formula for the weight in \eqref{def-ss-wt}, this can be written as
\[
J_{\1} = \frac{1}{Z_{L,n}(p,q)}
\left( p (1+n q) \rpf {L-1}{n+1}pq - q (1 + n p) \rpf {L-1}{n+1}pq \right),
\]
which simplifies to give the result.

The current of $1$'s again across the bond between sites $L$ and $1$ is given by
\[
J_1 = \aver{\tau_{L} (1 - \tau_1 - \sigma_1)}_{L,n}
- \aver{(1 - \tau_{L} - \sigma_{L}) \tau_1}_{L,n}.
\]
By conditioning on the location of the tracer particle, we obtain
\begin{align*}
J_1 =& \sum_{k=2}^{L-1} \Big(
\aver{\tau_{L} (1 - \tau_1 - \sigma_1) \sigma_k}_{L,n}
- \aver{(1 - \tau_{L} - \sigma_{L}) \tau_1 \sigma_k}_{L,n} \Big), \\
=& \sum_{k=2}^{L-1} \Big(
\aver{\sigma_1 \tau_{L-k+1} (1 - \tau_{L-k+2})}_{L,n}
- \aver{\sigma_1 (1 - \tau_{L-k+1}) \tau_{L-k+2} }_{L,n} \Big),
\end{align*}
where we have used translation invariance again in the last step to ensure that the first site is occupied by $\1$. Now, replace $k$ by 
$L-k+1$ to obtain
\begin{align*}
J_1 &= \sum_{k=2}^{L-1} \Big(
\aver{\sigma_1 \tau_k (1 - \tau_{k+1})}_{L,n}
- \aver{\sigma_1 (1 - \tau_{k}) \tau_{k+1} }_{L,n} \Big), \\
&= (p-q) \frac{\rpf {L-1}{n+1}pq}{Z_{L,n}(p,q)} \sum_{k=2}^{L-1}  \aver{\sigma_1 \tau_k}_{L-1,n}
\end{align*}
by Lemma~\ref{lem:current-corr}. 
Now, the sum on the right hand side is just the total number of regular particles, which is $n$. This concludes the proof.
\end{proof}

Using the formula for the current in Theorem~\ref{thm:current}, we can compute the limiting value of the current of $\1$ and $1$'s. 
Let $L, n \to \infty$ so that $n/L \to \rho \in (0,1)$. We then use
\eqref{rpf-asymp} to compute the ratio of $\rpf {L-1}{n+1}pq$ and $\rpf {L}{n+1}pq$. It is reasonable to assume $\rho$ and $y_0$ (defined in \eqref{y0-def}) are the same for both restricted partition functions in the large size limit. We then find that
\[
J_{\1} \approx \frac{p-q}{L} \frac{y_0 e}{L} 
\left(1 - \frac{1}{L} \right)^{L(1 - \rho) -2}
\approx \frac{(p-q) y_0 e^\rho}{L^2},
\]
which tends to $0$ as $L \to \infty$. Using the same computation,
\[
J_1 \approx \frac{(p-q) y_0 \rho e^\rho}{L},
\]
which also tends to $0$. 

We now want to calculate other correlation functions.
Translation invariance of the stationary distribution (see Proposition~\ref{prop:ss-trans-inv}) gives the densities.

\begin{prop}
\label{prop:1pt-fn}
In the system with $L$ sites and $n$ $1$'s,
\[
\aver{\sigma_i}_{L,n} = \frac{1}{L}, \qquad
\aver{\tau_i}_{L,n} = \frac{n}{L}.
\]
\end{prop}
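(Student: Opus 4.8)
The plan is to derive both densities purely from translation invariance together with conservation of particle number, without any reference to the explicit weight formula of Theorem~\ref{thm:ss}. First I would invoke Proposition~\ref{prop:ss-trans-inv}: since the stationary measure is invariant under the cyclic shift, the one-site marginals $\aver{\sigma_i}_{L,n}$ and $\aver{\tau_i}_{L,n}$ cannot depend on the site index $i$. Write $a$ and $b$ for these common (site-independent) values.

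Next I would exploit the fact that every configuration in $\Omega_{L,n}$ contains exactly one tracer particle and exactly $n$ ordinary particles. Hence the occupation variables obey the deterministic, configuration-independent identities $\sum_{i=1}^L \sigma_i = 1$ and $\sum_{i=1}^L \tau_i = n$ pointwise. Taking stationary expectations and interchanging the finite sum with the expectation gives
\[
\sum_{i=1}^L \aver{\sigma_i}_{L,n} = 1, \qquad \sum_{i=1}^L \aver{\tau_i}_{L,n} = n.
\]
Since each summand on the left equals the site-independent value $a$ (resp. $b$), the two sums collapse to $La$ and $Lb$, and solving yields $a = 1/L$ and $b = n/L$, which is precisely the assertion.

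There is no real obstacle here: the only nontrivial ingredient is the translation invariance already established in Proposition~\ref{prop:ss-trans-inv}, and everything else is a one-line conservation-law computation. I would add the remark that this argument never uses the factorized form of the steady state, so the densities are forced by cyclic symmetry and particle counting alone; this also explains why the tracer density $1/L$ and the ordinary-particle density $n/L$ are independent of the asymmetry parameters $p$ and $q$.
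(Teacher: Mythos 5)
Your proof is correct and is exactly the argument the paper has in mind: the paper dispatches this proposition with the single remark that translation invariance (Proposition~\ref{prop:ss-trans-inv}) gives the densities, with the particle-counting step left implicit. You have simply written out that same argument in full detail.
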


The computation of special two-point correlations, namely those involving the tracer particle, is more interesting. 
These will help understand the profile from the point of view of the tracer particle, i.e., in the environment process.

\begin{thm}
\label{thm:dens-formula}
In the system with $L$ sites and $n$ $1$'s,
\begin{equation}
\label{dens-formula}
\aver{\sigma_1 \tau_i}_{L,n} = 
\sum_{j=0}^{L-n-1} \sum_{k=0}^j
\binom{L - i}k  \binom{i-2}{j-k} p^k q^{j-k} 
\frac{\rpf{L-j-1}{n}pq }{Z_{L,n}(p,q) }, 
\end{equation}
for $2 \leq i \leq L$.
\end{thm}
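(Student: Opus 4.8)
The plan is to compute the constrained weight sum directly. By Theorem~\ref{thm:ss} and translation invariance, $Z_{L,n}\aver{\sigma_1\tau_i}_{L,n} = \sum w(\tau)$, where the sum ranges over $\tau \in \Omega_{L,n}$ with $\tau_1 = \1$ and $\tau_i = 1$. The central idea is to isolate the contribution of the distinguished regular particle sitting at site $i$ and to expand the product in \eqref{def-ss-wt} accordingly.

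First I would rewrite each factor of $w(\tau)$ so as to separate out the particle at $i$. For a zero at site $s$, the factor $1 + p\, m_s(\tau) + q\, n_s(\tau)$ counts all $n$ regular particles; writing $A_s := 1 + p\tilde m_s + q \tilde n_s$ for the same expression with the particle at $i$ omitted (so $\tilde m_s, \tilde n_s$ count only the other $n-1$ particles to the left/right of $s$), the particle at $i$ contributes an extra $q$ when $s < i$ and an extra $p$ when $s > i$. Hence $w(\tau) = \prod_{s<i,\,\tau_s=0}(A_s+q)\prod_{s>i,\,\tau_s=0}(A_s+p)$. Expanding this product amounts to choosing a subset $T_L$ of the zeros to the left of $i$ (each contributing a factor $q$) and a subset $T_R$ of the zeros to the right of $i$ (each contributing $p$), leaving the factor $A_s$ on every unchosen zero, so that $w(\tau) = \sum_{T_L,T_R} q^{|T_L|}p^{|T_R|}\prod_{s\notin T_L\cup T_R}A_s$.

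The \emph{key observation} is then that $\prod_{s\notin T_L\cup T_R}A_s$ equals $w(\tau'')$, where $\tau''$ is obtained from $\tau$ by deleting site $i$ together with all selected zeros in $T_L\cup T_R$: deleting a zero removes no regular particle, and the particle at $i$ was already excluded from $A_s$, so the counts $\tilde m_s, \tilde n_s$ of the surviving zeros are unchanged, and $\tau'' \in \Omega_{L-1-j,\,n-1}$ carries exactly $n-1$ regular particles, where $j := |T_L|+|T_R|$. Setting $k := |T_R|$, the summand becomes $p^k q^{j-k} w(\tau'')$.

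It remains to reorganise the triple sum over $(\tau,T_L,T_R)$ as a sum over the reduced configuration $\tau''$ together with the reinsertion data, which is where the binomials arise. I would argue that reconstructing $\tau$ from $\tau''$ is a bijection: the split of the non-tracer part of $\tau''$ into a left block (destined for sites $2,\dots,i-1$) and a right block (destined for sites $i+1,\dots,L$) is forced by $i$, $j$ and $k$, after which one reinserts the $j-k$ deleted zeros among the $i-2$ left slots in $\binom{i-2}{j-k}$ ways, places the particle at site $i$, and reinserts the $k$ deleted zeros among the $L-i$ right slots in $\binom{L-i}{k}$ ways. Summing $w(\tau'')$ over all $\tau''\in\Omega_{L-1-j,\,n-1}$ gives $\rpf{L-1-j}{n}pq$, and dividing by $Z_{L,n}$ yields \eqref{dens-formula}. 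The main obstacle is this last step: verifying cleanly that the reconstruction is a genuine bijection and that the left/right split is determined rather than an extra free choice. I would also check the boundary cases $i=2$ and $i=L$ (where one region is empty, forcing $j=k$ or $k=0$ respectively) and confirm that the range $0 \le j \le L-n-1$ matches the requirement $L-1-j \ge n$ for $\tau''$ to exist.
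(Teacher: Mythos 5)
Your proof is correct, and it takes a genuinely different route from the paper's. The paper proves Theorem~\ref{thm:dens-formula} by a double induction: reverse induction on $i$, with base case $i=L$ verified via the explicit alternating-sum formula of Corollary~\ref{cor:rpf-formula} and a binomial summation, and induction on $L$ for the step, which uses Lemma~\ref{lem:current-corr} to express $\aver{\sigma_1\tau_i}_{L,n}-\aver{\sigma_1\tau_{i+1}}_{L,n}$ through $\aver{\sigma_1\tau_i}_{L-1,n}$ and then a binomial-coefficient identity to recover the claimed summand. You instead expand the weight directly: each factor $1+p\,m_s+q\,n_s$ splits as $A_s+q$ for zeros left of $i$ and $A_s+p$ for zeros right of $i$, the product expands over subsets $T_L,T_R$, and $\prod_{s\notin T_L\cup T_R}A_s$ is recognized as $w(\tau'')$ for the configuration with site $i$ and the selected zeros deleted --- valid because deletion preserves relative order, so the left/right particle counts at surviving zeros are unchanged. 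Your bijection step is also sound: the two blocks of $\tau''$ have sizes $i-2-(j-k)$ and $L-i-k$, which are determined by $(i,j,k)$, so the reinsertion data is exactly a pair of subsets counted by $\binom{i-2}{j-k}\binom{L-i}{k}$, and summing $w(\tau'')$ over configurations in $\Omega_{L-j-1,n-1}$ starting with $\1$ gives $\rpf{L-j-1}{n}pq$ by \eqref{pf-rpf-relation}. What your route buys: it is self-contained (no reliance on Corollary~\ref{cor:rpf-formula} or Lemma~\ref{lem:current-corr}), it explains the binomial coefficients combinatorially as placement choices for deleted vacancies, and it makes the appearance of the smaller partition function transparent. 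What the paper's route buys: it reuses Lemma~\ref{lem:current-corr}, which is needed anyway for Theorem~\ref{thm:current}, so its marginal cost is low, at the price of an opaque base case and binomial manipulations. One cosmetic remark: translation invariance is not actually needed in your first line, since \eqref{ss-formula} applies verbatim to configurations with $\tau_1=\1$.
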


\begin{proof}
We will prove this result by a double induction argument. One is an induction on $L$ and the other is a reverse induction on $i$. 
If $L = n+1$, then both $j$ and $k$ are forced to be $0$ in \eqref{dens-formula} and we obtain $1$, as expected. Now suppose $L > n+1$.
At the last site, we find that
\[
\aver{\sigma_1 \tau_{L}}_{L,n} = (1+q)^{L-n-1} \frac{\rpf {L-1}{n}{p/(1+q)}{q/(1+q)} }{Z_{L,n}(p,q)},
\]
by following the same steps as the proof of Proposition~\ref{prop:pf-recur2}.
Now, plug in the expression from Corollary~\ref{cor:rpf-formula} on the right hand side to obtain
\begin{equation}
\label{dens-last-lhs}
\aver{\sigma_1 \tau_{L}}_{L,n} =
\frac{(1+q)^{L-n-1}}{Z_{L,n}(p,q)}
\sum_{j=0}^{n-1} \frac{(-1)^{n-1-j}}{j!(n-1-j)!}
\frac{(1 + j p + (n-j)q)^{L-2}}{(p-q)^{n-1}}.
\end{equation}
We will now show that we obtain the same expression from \eqref{dens-formula}.
Substitute $i=L$ in the right hand side of \eqref{dens-formula} and again use Corollary~\ref{cor:rpf-formula} for 
$\rpf {L-k-1}{n}pq $ to get
\begin{multline*}
\sum_{k=0}^{L-n-1} \binom{L-2}k q^k \sum_{j=0}^{n-1}  \frac{(-1)^{n-1-j}}{j!(n-1-j)!}  
\frac{(1 + j p + (n-1-j)q)^{L-2-k}}{Z_{L,n}(p,q)(p-q)^{n-1}}.
\end{multline*}
Interchange the $j$ and $k$ sums to find the expression
\begin{multline*}
\sum_{j=0}^{n-1} \frac{(-1)^{n-1-j}}
{Z_{L,n}(p,q) j!(n-1-j)!(p-q)^{n-1}}   \\
 \times \sum_{k=0}^{L-n-1} \binom{L-2}k q^k (1 + j p + (n-1-j)q)^{L-2-k}.
\end{multline*}
Note that we can replace the upper limit of the $k$ sum by $L-2$ because $Z_{L,n} = 0$ for $L \leq n$ (see Remark~\ref{rem:gf-property}).
Then the inner sum becomes a standard binomial sum and we obtain the same expression as \eqref{dens-last-lhs}, proving the base case.

Now, we suppose that \eqref{dens-formula} is correct for $i+1$. To prove the formula at site $i$ we have to show that
\begin{multline}
\label{dens-to-prove}
\aver{\sigma_1 \tau_i}_{L,n} - \aver{\sigma_1 \tau_{i+1}}_{L,n} =
\sum_{j=0}^{L-n-1} \sum_{k=0}^j \Bigg( 
\binom{L - i}k  \binom{i-2}{j-k} \\
 - \binom{L-1 - i}k  \binom{i-1}{j-k} \Bigg) p^k q^{j-k} 
\frac{\rpf {L-j-1}{n}pq }{Z_{L,n}(p,q)  }.
\end{multline}
Now, we observe that
\[
\aver{\sigma_1 \tau_i}_{L,n} - \aver{\sigma_1 \tau_{i+1}}_{L,n}
= \aver{\sigma_1 \tau_i (1-\tau_{i+1})}_{L,n} - \aver{\sigma_1 (1-\tau_i) \tau_{i+1}}_{L,n}.
\]
Using Lemma~\ref{lem:current-corr},  we have
\[
\aver{\sigma_1 \tau_i}_{L,n} - \aver{\sigma_1 \tau_{i+1}}_{L,n}
= (p-q) \frac{\rpf {L-1}{n+1}pq}{Z_{L,n}(p,q)} \aver{\sigma_1 \tau_i}_{L-1,n}.
\]
By the induction assumption on $L$, use \eqref{dens-formula} for the right hand side to obtain
\begin{align*}
\aver{\sigma_1 \tau_i}_{L,n} &- \aver{\sigma_1 \tau_{i+1}}_{L,n}
= (p-q) \\
& \times \sum_{j=0}^{L-n-2} \sum_{k=0}^j 
\binom{L - 1 - i}k  \binom{i-2}{j-k} p^k q^{j-k} 
\frac{\rpf {L-j-2}{n}pq }{Z_{L,n}(p,q)}.
\end{align*}
First, shift $j$ to $j+1$ on the right hand side. We will now work purely with the $k$ sum.
Use $(p-q) p^k q^{j-k} = p^{k+1} q^{j-k} - p^k q^{j-k+1} $ to split the right hand side into two sums, and shift $k$ to $k-1$ in the first sum to get
\begin{align*}
\sum_{k=0}^j \Bigg( \binom{L -1 - i}{k-1}  \binom{i-2}{j-k}
- \binom{L - 1 - i}k  \binom{i-2}{j-k-1} \Bigg) p^k q^{j-k+1}.
\end{align*}
A little bit of algebra shows that the expression in the parentheses above can also be written as
\[
\binom{L - i}{k}  \binom{i-2}{j-k} - \binom{L - 1 - i}k  \binom{i-1}{j-k},
\]
leading to the summand in \eqref{dens-to-prove}. The limits of the $j$ sum are $1$ and $L-n-1$, but this does not cause a problem because the term $j=0$ in \eqref{dens-to-prove} contributes nothing. This proves the result.
\end{proof}

Theorem~\ref{thm:dens-formula} simplifies considerably when $p=1,q=0$. In that case, the summand $k$ is forced to be equal to $j$ and we obtain the following:

\begin{cor}
\label{cor:dens}
When $p=1$ and $q=0$, the density of particles in the system with $L$ sites and $n$ particles is given by
\[
\aver{\sigma_1  \tau_i}_{L,n} = \frac{1}{Z_{L,n}(1,0)} \sum_{j=0}^{L-n-1} \binom{L-i}{j} \nsetk{L-j-1}n,
\]
for $2 \leq i \leq L$.
\end{cor}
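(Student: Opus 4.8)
The plan is to obtain Corollary~\ref{cor:dens} as a direct specialization of Theorem~\ref{thm:dens-formula} at $p=1$, $q=0$, the point being that the factor $p^k q^{j-k}$ kills almost every term of the double sum. In \eqref{dens-formula} the only explicit dependence on the rates sits in the monomial $p^k q^{j-k}$ and in the denominator $Z_{L,n}(p,q)$, which simply becomes $Z_{L,n}(1,0)$. Setting $p=1$ and $q=0$ and adopting the convention $0^0 = 1$, I would observe that $p^k q^{j-k} = 1^k\, 0^{j-k}$ vanishes unless $j-k=0$, so only the diagonal terms $k=j$ of the inner sum survive.

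First I would isolate these surviving terms. On the diagonal $k=j$ we have $p^k q^{j-k}=1$ and $\binom{i-2}{j-k} = \binom{i-2}{0} = 1$, so the inner sum over $k$ collapses to its single $k=j$ term and the binomial $\binom{L-i}{k}$ becomes $\binom{L-i}{j}$. Thus \eqref{dens-formula} reduces to
\[
\aver{\sigma_1 \tau_i}_{L,n} = \frac{1}{Z_{L,n}(1,0)} \sum_{j=0}^{L-n-1} \binom{L-i}{j}\, \rpf{L-j-1}{n}{1}{0}.
\]
Next I would rewrite the remaining restricted partition function using Corollary~\ref{cor:pf-p1q0}, which asserts that at $p=1$, $q=0$ the restricted partition function coincides with the Stirling number carrying the same pair of indices; applied with top index $L-j-1$ and bottom index $n$ this gives $\rpf{L-j-1}{n}{1}{0} = \nsetk{L-j-1}{n}$. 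Substituting this into the display above reproduces the claimed formula exactly.

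There is no real analytic obstacle: the whole argument is the evaluation of an already-proved identity at a boundary value of the parameters. The only points demanding care are the $0^0=1$ convention that justifies collapsing the double sum to its diagonal, and the index bookkeeping needed to match $\rpf{L-j-1}{n}{1}{0}$ with $\nsetk{L-j-1}{n}$ rather than an adjacent Stirling number; both are routine once spelled out.
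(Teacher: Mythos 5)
Your proposal is correct and matches the paper's own argument: the paper likewise obtains Corollary~\ref{cor:dens} by setting $p=1$, $q=0$ in Theorem~\ref{thm:dens-formula}, noting that the factor $q^{j-k}$ forces $k=j$, and identifying $\rpf{L-j-1}{n}{1}{0}$ with $\nsetk{L-j-1}{n}$ via Corollary~\ref{cor:pf-p1q0}. Your extra care about the $0^0=1$ convention and the index matching is exactly the right bookkeeping, which the paper leaves implicit.
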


\section{Profile in the environment process}
\label{sec:profile}

In this section, we obtain the density profile as seen from the tracer particle, first for the finite system and then in the thermodynamic limit. 
We begin with the proof of Theorem~\ref{thm:dens-from-test} building on the results in the previous section.
Recall from Section~\ref{sec:def} that $\averr{\cdot}_{L,n}$ denotes the expectation in the environment process. 
Since all positions will be measured relative to the tracer particle, the position of the tracer particle is arbitrary and can be chosen to be $0$. Throughout this section, we will label the positions from $0$ to $L-1$.

\begin{proof}[Proof of Theorem~\ref{thm:dens-from-test}]
By the translation invariance in Proposition~\ref{prop:ss-trans-inv}, 
\[
\averr{\tau_i}_{L,n} = \sum_{j=0}^{L-1} \aver{\sigma_j \tau_{j+i}}
= L \aver{\sigma_0 \tau_{i}}.
\]
We now use the formula for the latter from Theorem~\ref{thm:dens-formula} (after replacing $i$ by $i+1$) to obtain $\averr{\tau_i}_{L,n}$ for $i > 0$. 

When looking at the densities behind the tracer, we will look at positions $-i$ for $1 \leq i \leq L-1$. Because of the circular geometry, position $-i$ is the same as $L-i$. 
We therefore again use the formula for $\aver{\sigma_0 \tau_{L-i+1}}$
from  Theorem~\ref{thm:dens-formula} to obtain $\averr{\tau_{-i}}_{L,n}$, completing the proof.
\end{proof}

Now, we use Theorem~\ref{thm:dens-from-test} to compute the density both in front of and behind the tracer particle in the infinite volume limit. 
We fix a position $i \approx xL$ in the system of size $L$, where $-1 \leq x \leq 1$, and there are $n \approx \rho L$ particles. 
We will compute the ratio of $\rpf {L-j-1}npq$ and $\rpf L{n+1}pq$ assuming that $j$ is fixed. This is a reasonable thing to do since the partition function grows superexponentially and therefore the ratio will die down faster than exponentially in $j$. Thus, one expects only a finite number of terms to contribute. When $j$ is finite, it is not difficult to see that the density of particles in the system with $L-j-1$ sites and $n-1$ particles is also $\rho$ to the lowest order. For the same reason, $y_0$ (defined in \eqref{y0-def}) is also the same to the lowest order,
Using \eqref{rpf-asymp}, we find that
\begin{equation}
\label{ratio-rpf}
\begin{split}
\frac{\rpf {L-j-1}npq}{\rpf L{n+1}pq} \approx &
 \frac{(L-j-1)^{L-j-2}}{(n-1)^{n-1}}
\frac{n^n}{L^{L-1}} \frac{\exp(j) y_0^{j+1} (p - q)}{\exp(p y_0) - \exp(q y_0)} \\
\approx & \frac{\rho y_0^{j+1} (p - q) L^{-j}}{\exp(p y_0) - \exp(q y_0)}.
\end{split}
\end{equation}
Recall that $\averr \cdot$ denotes the limiting distribution in the environment process.
Substituting \eqref{ratio-rpf} into the first formula in Theorem~\ref{thm:dens-from-test} with $0 \leq x \leq 1$, and approximating the sum by taking the $j$ limit to infinity, we find that the density in front of the particle is given by
\begin{equation}
\label{dens-asymp-front}
\begin{split}
\averr{\tau_{x L}} \approx & \sum_{j=0}^\infty \sum_{k=0}^j 
\frac{(x q)^{j-k} ((1-x)p)^k}{(j-k)! k!} \frac{\rho y_0^{j+1} (p - q)}{\exp(p y_0) - \exp(q y_0)} \\
= & \frac{\rho y_0 (p - q) }{\exp(p y_0) - \exp(q y_0)}
\sum_{j=0}^\infty \frac{y_0^j}{j!} 
\sum_{k=0}^j \binom{j}{k} (x q)^{j-k} ((1-x)p)^k \\
= & \frac{\rho y_0 (p - q) }{\exp(p y_0) - \exp(q y_0)}
\sum_{j=0}^\infty \frac{y_0^j}{j!} (x q + (1-x)p)^j \\
= & \rho y_0 (p - q) \frac{\exp(-(p-q) y_0 x)}{1 - \exp(-(p-q) y_0)},
\end{split}
\end{equation}
where we have used the binomial theorem to obtain the third line. Therefore, we have found that the density decays as a function of the scaled position ahead of the tracer particle. 
See Figure~\ref{fig:eg-density-plot} for a comparison with exact data for a small system.

To calculate the density behind the particle, we essentially repeat this calculation by plugging in \eqref{ratio-rpf} into the second formula in Theorem~\ref{thm:dens-from-test} with $0 \leq x \leq 1$. We then obtain
\begin{equation}
\label{dens-asymp-behind}
\averr{\tau_{-x L}} \approx \rho y_0 (p - q) \frac{\exp((p-q) y_0 x)}{\exp((p-q) y_0) - 1}.
\end{equation}
This validates the formulas in \eqref{dens-asymp}.

\section{Asymptotics of the restricted partition function}
\label{sec:asymp-pf}

We will first compute the asymptotics of $\rpf L{n+1}pq$ when $L,n$ become infinite with a finite density of particles. In Section~\ref{sec:asymp-special}, we will compute the asymptotics as in the previous section specialized to $p=1,q=0$. In Section~\ref{sec:asymp-finite}, we will fix $n$ and let $L$ approach infinity. The results in this section are not at the same level of rigor as in the other sections, but we believe they could be made rigorous with some effort. However, we have performed extensive numerical checks of the results to convince ourselves of their validity.

It is customary to take $L, n \to \infty$ so that there is a finite density of particles in the limit. Let $\rho = \lim_{n \to \infty} (n+1)/(L)$ be the limiting density of particles, which we assume to exist and satisfy $0 < \rho < 1$. We will first derive an asymptotic formula for the restricted partition function $\rpf L{n+1}pq$, closely following Temme's approach~\cite{temme-1993} for uniform asymptotics of the Stirling numbers.

From the generating function in Theorem~\ref{thm:rpf-egf}, it immediately follows that
\begin{equation}
\sum_{L \geq n+1} \rpf L{n+1}pq \frac{y^{L-1}}{(L-1)!} = \frac{\exp(y)}{n!}
\left( \frac{\exp(py) - \exp(qy)}{p - q} \right)^n.
\end{equation}
From this, we obtain the contour integral formula
\begin{equation}
\label{rpf-cont1}
\rpf L{n+1}pq = \frac{(L-1)!}{n!} \frac{1}{2 \pi i}
\oint_C \frac{\exp(y)}{y^{L}} \left( \frac{\exp(py) - \exp(qy)}{p-q} \right)^n \; \text{d}y,
\end{equation}
where $C$ is a small contour around the origin.
Write the integrand as $\exp(\phi(y))/y$, where using $n = \rho L$,
\begin{equation}
\label{phi-def}
\phi(y) = y + \rho L \log \left( \exp(py) - \exp(qy) \right) - 
\rho L \log (p-q) - (L-1) \log y.
\end{equation}
We now use the saddle point method to estimate the integral.
Setting $\phi' = 0$ and taking the large $L$ limit, we obtain the equation
\begin{equation}
\label{y0-def}
\exp((p-q)y) = \frac{\rho q y - 1}{\rho p y - 1}.
\end{equation}
Although $y=0$ is a solution, that is not of interest. 
It is an exercise in calculus to show that \eqref{y0-def} has a unique positive real solution, $y_0$, say.
The standard saddle point approximations do not give the best results, at least for $p=1,q=0$, and so we will follow Temme's strategy~\cite{temme-1993}.
For $y \to 0+$, $\exp(py) - \exp(qy) \sim (p-q)y$, and therefore
$\phi(y) \sim (n - L + 1) \log y$.
And, as $y \to \infty$, $\phi(y) \sim (1 + n p) y$, where we have assumed $p > q$.
These two limits suggest the transformation $y \to t(y)$ such that
\begin{equation}
\label{y to t}
\phi(y) = (1 + n p)  t + (n - L + 1) \log t + A,
\end{equation}
where $A$ does not depend on $t$. The derivative of the right hand side vanishes at $t_0 = \frac{L- n - 1}{1 + n p}$.
Then the following correspondences hold between the variables $y$ and $t$:
\begin{equation}
\begin{array}{|c||c|c|c|}
\hline
y & 0 & y_0 & \infty \\
\hline
t & 0 & t_0 & \infty \\
\hline
\end{array}.
\end{equation}
Substituting $y = y_0$ and $t = t_0$ into \eqref{y to t} shows that
\begin{equation}
\label{A-def}
A = \phi(y_0) - (1 + n p) t_0 + (L - n -1) \log t_0.
\end{equation}
Then a change of variables brings \eqref{rpf-cont1} to the form
\begin{equation}
\label{rpf-cont2}
\rpf L{n+1}pq = \frac{(L-1)!}{n!} \frac{\exp(A)}{2 \pi i}
\oint_{C'} \frac{\exp((1 + np) t) f(t)}{t^{L - n}} \; \text{d}t,
\end{equation}
where $f(t) = \frac{t}{y} \frac{\text{d}y}{\text{d}t}$ is analytic in a large domain of the complex plane including the origin. We deform the initial contour of a small circle around the origin to the contour $C'$ which passes through $t_0$. Differentiating \eqref{y to t} with respect to $t$ and using the definition of $t_0$ shows that
\begin{equation}
\label{f-def}
f(t) = \frac{(1 + n p)(t - t_0)}{y \phi'(y)}.
\end{equation}
We now apply the Cauchy integral formula to the contour integral \eqref{rpf-cont2} and obtain a first-order approximation to the restricted partition function
\begin{equation}
\label{rpf-approx}
\rpf L{n+1}pq \approx \binom{L-1}{n} \exp(A) f(t_0) (1 + n p)^{L-n}.
\end{equation}
To complete the analysis, we have to derive a formula for $\exp(A)$ and $f(t_0)$. The first is straightforward. From \eqref{A-def}, we get
\begin{equation}
\label{exp(A)-value}
\exp(A) = \left( \frac{\exp(py) - \exp(qy)}{p-q} \right)^n
\frac{\exp(y_0)}{y_0^{L-1}}
\left( \frac{L-n - 1}{e (1 + n p)} \right)^{L-n-1}.
\end{equation}
Applying L'H\^opital's rule to \eqref{f-def}, we see that
\begin{equation}
f(t_0) = \frac{1 + n p}{y_0 \phi''(y_0)} \left. \left( \frac{\text{d}y}{\text{d}t} \right|_{t = t_0, y = y_0} \right)^{-1}.
\end{equation}
But we know by definition of $f$ that
\[
\left. \frac{\text{d}y}{\text{d}t} \right|_{t = t_0, y = y_0} = 
\frac{f(t_0) y_0}{t_0},
\]
and substituting this above, we find that
\begin{equation}
\label{f(t_0)-computation}
f(t_0) = \frac{1}{y_0} \sqrt{\frac{(1 + n p) t_0}{\phi''(y_0)}}.
\end{equation}
Since we already know the value of $t_0$, all that remains is to calculate $\phi''(y_0)$. Taking the double derivative of \eqref{phi-def}, setting $y = y_0$ using \eqref{y0-def} and simplifying, we obtain
\begin{equation}
\begin{split}
\phi''(y_0) &= -\frac{\exp((p+q)y_0) (p-q)^2 (L-1-y)^2}{n y_0^2 \rho^2 (p \exp(p y_0) - q \exp(q y_0))^2} + \frac{L-1}{y_0^2} \\
& \approx \frac{1}{y_0^2} \left( L - 1 - \frac{(p \rho y_0 - 1) (q \rho y_0 - 1)}n
\right).
\end{split}
\end{equation} 
Plugging this into \eqref{f(t_0)-computation} and substituting the value of $t_0$, we get
\begin{equation}
\label{f(t_0)-value}
f(t_0) \approx \sqrt{\frac{\rho(1-\rho)}{\rho - (p \rho y_0 - 1) (q \rho y_0 - 1)}}.
\end{equation}
Finally, we substitute \eqref{f(t_0)-value} and \eqref{exp(A)-value} into 
\eqref{rpf-approx} and use the Stirling formula for the binomial coefficient to obtain the final result in \eqref{rpf-asymp}.

\subsection{Special case: \texorpdfstring{$p=1,q=0$}{p=1,q=0}}
\label{sec:asymp-special}

As before, we take the limit $L,n \to \infty$ so that $n/L \to \rho \in (0,1)$.
There are a lot of simplifications when $p=1$ and $q=0$, and it is worth going through this case in detail. By Corollary~\ref{cor:pf-p1q0}, the partition function is the Stirling number $\nsetk L{n+1}$.

The solution of \eqref{y0-def} turns out to be explicitly solved by $y_0 = 1/\rho - G$, where 
$G \equiv G(\rho) = - W_0 ( -\exp(-1/\rho)/\rho)$ and $W_0$ is the principal branch of the Lambert W function.
Recall that the Lambert W function is a family of functions defined by the inverse function of $f(z) = z \exp(z)$. It is multivalued since $f$ is not injective, which is why we consider the principal branch. 

By using known properties of the Lambert function, we can show that $\exp(y_0) = 1/(\rho G)$. Substituting $p=1, q=0$ and this value of $y_0$ in \eqref{rpf-asymp}, we obtain, after some manipulations,
\begin{equation}
\nsetk{L}{n + 1} \approx \frac{1}{\sqrt{2 \pi \rho L (1-G)}}
\left( \frac{\rho L}{1 - \rho G} \right)^{L-\rho L-1} 
\text{e}^{ \rho L(1 - G) + 1/\rho - G}.
\end{equation}
We note that we could obtain this result directly from~\cite{temme-1993},

The density at position $i \approx xL$ in front of the tracer particle is given by setting $p=1,q=0$ in \eqref{dens-asymp-front}. 
Again, using properties of the Lambert function, we find that the prefactor becomes 1, and therefore
\begin{equation}
\averr{ \tau_{xL} } \approx \exp(x(G - 1/\rho)) = (\rho G)^{x}.
\end{equation}
Thus, the density at position $xL$ decays with $x$ as $L \to \infty$. Moreover, for $x=0$, the density is 1. Suppose we take $i$ to be fixed, i.e. independent of $L$. Since the range of the sum in Corollary~\ref{cor:dens} is independent of $i$, we expect the answer to be the same as that for $i = 2$ when $L$ is large. Therefore, we find that the density at any fixed distance ahead of the tracer particle is precisely 1. 
This is intuitively clear since the tracer particle performs totally asymmetric motion and the exclusion interaction means that more and more particles accumulate as time grows. In the thermodynamic limit, this causes an infinite traffic jam! 
This justifies the heuristic picture proposed in \cite[Section 3, last paragraph]{oshanin-etal-2004}.
Similarly, the density at position $-i \approx -xL$ behind the tracer particle is given by setting $p=1,q=0$ in \eqref{dens-asymp-behind} and we obtain
\begin{equation}
\averr{ \tau_{-xL} } \approx (\rho G)^{1-x}.
\end{equation}
In contrast, the density immediately behind the tracer particle is not zero, but is given by $\rho G$. The function $\rho G$ grows monotonically with $\rho$ and is a concave function. See Figure~\ref{fig:rhoG-plot} to see the nature of the function.

\begin{center}
\begin{figure}[htbp!]
\includegraphics[scale=0.4]{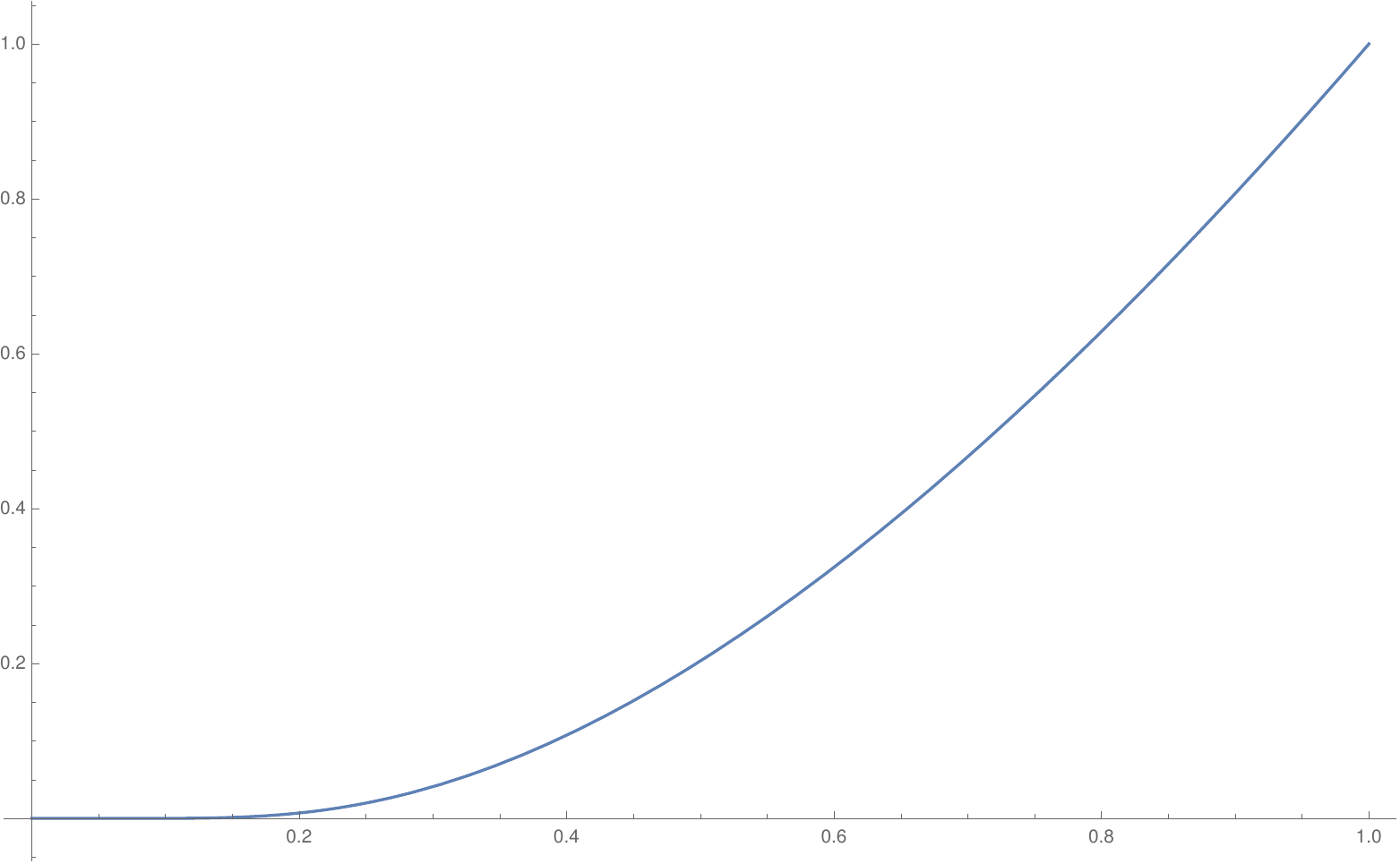}
\caption{A plot of $\rho G$ versus $\rho$ for small values of $\rho$. }
\label{fig:rhoG-plot}
\end{figure}
\end{center}

\subsection{Special case: finite number of particles}
\label{sec:asymp-finite}

We now consider the limit in which $n$ is fixed and we let the size of the system tend to infinity. 
We can assume $p \geq q$ as before. Since the case of $p = q$ is settled by Corollary~\ref{cor:ss-p=q}, we can assume $p > q$. In that case, we obtain the following asymptotic result.

From Theorem~\ref{thm:rpf-ogf}, the generating function $F_{n}(x)$ has poles of order one at $\{ 1/(1 + j p + (n - j)q)\}$ for $0 \leq j \leq n$. 
From the standard theory of analytic combinatorics (see \cite[Chapter VI]{flajolet-sedgewick-2009}, for example), it is well known that the pole closest to the origin contributes the most to the asymptotics. Since $p > q$ and all the poles lie on the positive real axis, the dominant contribution is from the smallest pole. This is given by $1/(1 + n p)$. Then the asymptotics are given by
\[
\rpf L{n+1}pq \approx \prod_{j=0}^{n-1} 
\frac{1}{1 - (1 + j p + (n - j)q )/ (1 + n p)}
\left( \frac{1}{1 + n p} \right)^{-L+n+1},
\]
which simplifies to 
\[
\rpf L{n+1}pq \approx \frac{(1 + n p)^{L -2}}{n! (p-q)^{n-1}} \quad \text{as $L \to \infty$.}
\]
Using this expression and Theorems~\ref{thm:current} and \ref{thm:dens-formula}, one can easily verify that the current and the density go to zero as $L \to \infty$, as it was expected.

\section*{Acknowledgements}
The author was partially supported by Department of Science and Technology grant EMR/2016/006624 and by the UGC Centre for Advanced Studies. We thank Anupam Kundu, Joel Lebowitz and E. R. Speer for helpful discussions.

\bibliographystyle{alpha}
\bibliography{ssep,asep}

\newcommand{\etalchar}[1]{$^{#1}$}
\begin{thebibliography}{BOMM92}

\bibitem[BCL{\etalchar{+}}99]{benichou-etal-1999}
O.~B{\'e}nichou, A.~M. Cazabat, A.~Lemarchand, M.~Moreau, and G.~Oshanin.
\newblock Biased diffusion in a one-dimensional adsorbed monolayer.
\newblock {\em Journal of Statistical Physics}, 97(1):351--371, Oct 1999.

\bibitem[BE07]{be}
R.~A. Blythe and M.~R. Evans.
\newblock Nonequilibrium steady states of matrix-product form: a solver's
  guide.
\newblock {\em J. Phys. A}, 40(46):R333--R441, 2007.

\bibitem[BFL96]{benjamini-ferrari-landim-1996}
I.~Benjamini, P.~A. Ferrari, and C.~Landim.
\newblock Asymmetric conservative processes with random rates.
\newblock {\em Stochastic Process. Appl.}, 61(2):181--204, 1996.

\bibitem[BOMM92]{burlatsky-etal-1992}
S.F. Burlatsky, G.S. Oshanin, A.V. Mogutov, and M.~Moreau.
\newblock Directed walk in a one-dimensional lattice gas.
\newblock {\em Physics Letters A}, 166(3):230 -- 234, 1992.

\bibitem[BOMR96]{burlatsky-etal-1996}
S.~F. Burlatsky, G.~Oshanin, M.~Moreau, and W.~P. Reinhardt.
\newblock Motion of a driven tracer particle in a one-dimensional symmetric
  lattice gas.
\newblock {\em Phys. Rev. E}, 54:3165--3172, Oct 1996.

\bibitem[DEHP93]{dehp}
B.~Derrida, M.~R. Evans, V.~Hakim, and V.~Pasquier.
\newblock Exact solution of a {$1$}{D} asymmetric exclusion model using a
  matrix formulation.
\newblock {\em J. Phys. A}, 26(7):1493--1517, 1993.

\bibitem[EH05]{evans-hanney-2005}
M~R Evans and T~Hanney.
\newblock Nonequilibrium statistical mechanics of the zero-range process and
  related models.
\newblock {\em Journal of Physics A: Mathematical and General},
  38(19):R195--R240, apr 2005.

\bibitem[Fer86]{ferrari-1986}
Pablo~A. Ferrari.
\newblock The simple exclusion process as seen from a tagged particle.
\newblock {\em Ann. Probab.}, 14(4):1277--1290, 1986.

\bibitem[FGL85]{ferrari-goldstein-lebowitz-1985}
Pablo~A. Ferrari, Sheldon Goldstein, and Joel~L. Lebowitz.
\newblock Diffusion, mobility and the {E}instein relation.
\newblock In {\em Statistical physics and dynamical systems ({K}\"{o}szeg,
  1984)}, volume~10 of {\em Progr. Phys.}, pages 405--441. Birkh\"{a}user
  Boston, Boston, MA, 1985.

\bibitem[FS09]{flajolet-sedgewick-2009}
Philippe Flajolet and Robert Sedgewick.
\newblock {\em Analytic combinatorics}.
\newblock Cambridge University Press, Cambridge, 2009.

\bibitem[GKP94]{knuth-graham-patashnik-1994}
Ronald~L. Graham, Donald~E. Knuth, and Oren Patashnik.
\newblock {\em Concrete mathematics}.
\newblock Addison-Wesley Publishing Company, Reading, MA, second edition, 1994.
\newblock A foundation for computer science.

\bibitem[KF96]{krug-ferrari-1996}
Joachim Krug and Pablo~A Ferrari.
\newblock Phase transitions in driven diffusive systems with random rates.
\newblock {\em Journal of Physics A: Mathematical and General},
  29(18):L465--L471, sep 1996.

\bibitem[KO05]{komorowski-olla-2005}
Tomasz Komorowski and Stefano Olla.
\newblock On mobility and {E}instein relation for tracers in time-mixing random
  environments.
\newblock {\em J. Stat. Phys.}, 118(3-4):407--435, 2005.

\bibitem[LE20]{evans-lobaskin-2020}
Ivan Lobaskin and Martin~R Evans.
\newblock Driven tracers in a one-dimensional periodic hard-core lattice gas.
\newblock {\em arXiv preprint arXiv:2002.02841}, 2020.

\bibitem[Lig99]{liggett-sis-1999}
Thomas~M. Liggett.
\newblock {\em Stochastic interacting systems: contact, voter and exclusion
  processes}, volume 324 of {\em Grundlehren der Mathematischen Wissenschaften
  [Fundamental Principles of Mathematical Sciences]}.
\newblock Springer-Verlag, Berlin, 1999.

\bibitem[Lig05]{liggett-ips-2005}
Thomas~M. Liggett.
\newblock {\em Interacting particle systems}.
\newblock Classics in Mathematics. Springer-Verlag, Berlin, 2005.
\newblock Reprint of the 1985 original.

\bibitem[Lou05]{loulakis-2005}
Michail Loulakis.
\newblock Mobility and {E}instein relation for a tagged particle in asymmetric
  mean zero random walk with simple exclusion.
\newblock {\em Ann. Inst. H. Poincar\'{e} Probab. Statist.}, 41(2):237--254,
  2005.

\bibitem[LOV98]{landim-olla-volchan-1998}
C.~Landim, S.~Olla, and S.~B. Volchan.
\newblock Driven tracer particle in one-dimensional symmetric simple exclusion.
\newblock {\em Comm. Math. Phys.}, 192(2):287--307, 1998.

\bibitem[LV00]{landim-volchan-2000}
Cl\'{a}udio Landim and S\'{e}rgio~B. Volchan.
\newblock Equilibrium fluctuations for a driven tracer particle dynamics.
\newblock {\em Stochastic Process. Appl.}, 85(1):139--158, 2000.

\bibitem[OBBM04]{oshanin-etal-2004}
G.~Oshanin, O.~B\'{e}nichou, S.~F. Burlatsky, and M.~Moreau.
\newblock Biased tracer diffusion in hard-core lattice gases: some notes on the
  validity of the {E}instein relation.
\newblock In {\em Instabilities and nonequilibrium structures {IX}}, volume~9
  of {\em Nonlinear Phenom. Complex Systems}, pages 33--74. Kluwer Acad. Publ.,
  Dordrecht, 2004.

\bibitem[Tem93]{temme-1993}
N.~M. Temme.
\newblock Asymptotic estimates of {S}tirling numbers.
\newblock {\em Stud. Appl. Math.}, 89(3):233--243, 1993.

\bibitem[Wil94]{wilf-1994}
Herbert~S. Wilf.
\newblock {\em generatingfunctionology}.
\newblock Academic Press, Inc., Boston, MA, second edition, 1994.

\end{thebibliography}

\end{document}